  \providecommand\BibTeX{{%
    \normalfont B\kern-0.5em{\scshape i\kern-0.25em b}\kern-0.8em\TeX}}}
\newcommand{\gettikzxy}[3]{%
  \tikz@scan@one@point\pgfutil@firstofone#1\relax
  \edef#2{\the\pgf@x}%
  \edef#3{\the\pgf@y}%
}
\newcommand{\graphact}{{\fontfamily{lmtt}\selectfont GraphACT}}
\DeclarePairedDelimiterX\set[1]\lbrace\rbrace{#1}
\newcommand{\rdsp}{\text{R}_\text{DSP}}
\newcommand{\rbram}{\text{R}_\text{BRAM}}
\newcommand{\rbw}{\text{R}_\text{BW}}
\newcommand{\func}[2][]{
    \ifthenelse{\equal{#2}{}}
    {\text{\fontfamily{lmtt}\selectfont #1}}
    {\text{\fontfamily{lmtt}\selectfont #1}\left(#2\right)}}
\newcommand{\Gg}[1][]{
    \ifthenelse{\equal{#1}{}}
    {\mathcal{G}}
    {\mathcal{G}}_{#1}}
\newcommand{\Vg}[1][]{
    \ifthenelse{\equal{#1}{}}
    {\mathcal{V}}
    {\mathcal{V}_{#1}}}
\newcommand{\Eg}[1][]{
    \ifthenelse{\equal{#1}{}}
    {\mathcal{E}}
    {\mathcal{E}_{#1}}}
\newcommand{\Wg}[1][]{
    \ifthenelse{\equal{#1}{}}
    {\mathcal{W}}
    {\mathcal{W}_{#1}}}
\newcommand{\X}[1][]{
    \ifthenelse{\equal{#1}{}}
    {\bm{X}}
    {\bm{X}^{\paren{#1}}}}
\newcommand{\Anorm}[1][]{
    \ifthenelse{\equal{#1}{}}
    {\widetilde{\bm{A}}}
    {\widetilde{\bm{A}}_{#1}}}
\newcommand{\W}[1][]{
    \ifthenelse{\equal{#1}{}}
    {\bm{W}}
    {\bm{W}^{\paren{#1}}}}
\newcommand{\Mg}[1][]{
    \ifthenelse{\equal{#1}{}}
    {\mathcal{M}}
    {\mathcal{M}}_{#1}}
\newcommand{\trans}{\mathsf{T}}
\newcommand{\size}[1]{\left\lvert #1 \right\rvert}
\newcommand{\paren}[1]{\left( #1 \right)}
\tikzset{
    vertex/.style = {
        circle,
        fill            = black,
        outer sep = 2pt,
        inner sep = 1pt,
    }
}
\newtheorem{theorem}{Theorem}[section]
\begin{document}

\fancyhead{}

\title{{\graphact}: \underline{A}ccelerating G\underline{C}N \underline{T}raining on CPU-FPGA Heterogeneous Platforms}

\author{Hanqing Zeng}
\affiliation{\institution{University of Southern California}
\city{Los Angeles}
\state{California}}
\email{zengh@usc.edu}

\author{Viktor Prasanna}
\affiliation{\institution{University of Southern California}
\city{Los Angeles}
\state{California}}
\email{prasanna@usc.edu}

\begin{abstract}
Graph Convolutional Networks (GCNs) have emerged as the state-of-the-art deep learning model for representation learning on graphs.
It is challenging to accelerate training of GCNs, due to 
\begin{enumerate*}
\item substantial and irregular data communication to propagate information within the graph, and 
\item intensive computation to propagate information along the neural network layers.
\end{enumerate*}
To address these challenges, we design a novel accelerator for training GCNs on CPU-FPGA heterogeneous systems, by incorporating multiple algorithm-architecture co-optimizations. 
We first analyze the computation and communication characteristics of various GCN training algorithms, and select a subgraph-based algorithm that is well suited for hardware execution.
To optimize the feature propagation within subgraphs, we propose a light-weight pre-processing step based on a graph theoretic approach. Such pre-processing performed on the CPU significantly reduces the memory access requirements and the computation to be performed on the FPGA. 
To accelerate the weight update in GCN layers, we propose a systolic array based design for efficient parallelization.
We integrate the above optimizations into a complete hardware pipeline, and analyze its load-balance and resource utilization by accurate performance modeling. 
We evaluate our design on a Xilinx Alveo U200 board hosted by a $40$-core Xeon server. On three large graphs, we achieve an order of magnitude training speedup with negligible accuracy loss, compared with state-of-the-art implementation on a multi-core platform. 
\end{abstract}

\maketitle

\section{Introduction}
\label{sec: intro}

Recently, representation learning on graphs has attracted much attention. 
By extracting structured, low dimensional features from the unstructured, high dimensional graph, many downstream tasks such as node classification \cite{gcn,graphsage}, link prediction \cite{gcn_link}, graph classification \cite{diff-pool} and clustering \cite{mgae} can be performed easily and effectively. 
Among the numerous representation learning methods, Graph Convolutional Networks (GCNs) \cite{gcn} are capable of learning significantly better features than traditional methods \cite{deepwalk,node2vec}. GCNs have been used to facilitate user recommendation for social networks \cite{gcn_web}, to identify protein functionality from interaction graphs \cite{graphsage} and to help circuit testability analysis for EDA \cite{dac_gcn}. 

Despite the popularity of GCNs, training is still expensive in terms of time and computation resources. 
To scale GCNs to larger graphs and to enable fast re-training on dynamic graphs, it is critical to develop accelerators. Existing works \cite{graphsage,ipdps19} parallelize training on GPU and multi-core platform. How to accelerate training by exploiting the FPGA hardware is currently not well studied. 

Similar to Convolutional Neural Network (CNNs), GCNs are built by iteratively stacking multiple layers. Operations of a \emph{graph convolutional layer} are decomposed into two major steps, to 
\begin{enumerate*}
\item propagate information within the graph and 
\item propagate information along the neural network layers.
\end{enumerate*}
Correspondingly, FPGA accelerators need to address the below challenges to improve performance:

\paragraph{Memory access} The feature propagation (step 1) in a large and sparse graph incurs high volume of irregular memory accesses, both on-chip and off-chip. 
The memory challenge is unique to the GCN training problem.
While CNN accelerators  \cite{fpga17_loop,fpga_systolic,caffeine,systolic_freq,Hanqing_FPGA18,dnnbuilder,hanqing_fpl19} achieve high data reuse and regular accesses by tiling and sliding windows on input tensors, such tensor operations do not generalize to sparse graphs. 
In addition, while graph processing accelerators \cite{fpgp,foregraph,shijie_fccm,lijing_graph,lijing_graph2} optimize data flow and layout for propagation of node labels, such optimizations hardly lead to performance gain when the labels are replaced with long feature vectors. 

\paragraph{Computation} The propagation along GCN layers (step 2) involves computationally intensive dense tensor operations on the model weights and graph node features. 
Fast computation of this step require high consumption of hardware resources. 

\paragraph{Load-balance}
Degree-imbalance of graph nodes can significantly degrade the performance of feature propagation. 
Consequently, the overall layer computation (steps 1 and 2) can be load-imbalanced. 
It is challenging to design on-chip computation modules and the corresponding scheduler to ensure load-balance for arbitrary graphs.

We propose {\graphact}, a framework to \underline{a}ccelerate G\underline{C}N \underline{t}raining by addressing the three challenges.
The main contributions are:

\begin{outline}
\1 We propose {\graphact} with the following optimizations:
    \2 \textbf{Algorithm selection}: By analyzing various GCN training algorithms, we select a subgraph-based minibatch algorithm to significantly reduce CPU-FPGA communication. 
    \2 \textbf{Redundancy reduction}: For each subgraph as a minibatch, we identify and eliminate the recurring aggregation operations between common node neighbors. Such online pre-processing on CPU significantly reduces the number of on-chip operations and BRAM accesses on FPGA. 
    \2 \textbf{Parallelization}: We parallelize the key training steps with optimized on-chip computation modules. Integrating the modules into the processing pipeline, we achieve load-balance on a wide range of target FPGA devices. 
\1 We develop an accurate performance model for {\graphact}. The model identifies architectural parameters of the FPGA design, and algorithmic parameters of the minibatch sampler. 
\1 We evaluate {\graphact} on a Xilinx Alveo U200 board hosted by a 40-core Xeon server. We achieve an order of magnitude training time speedup with negligible accuracy loss, compared with state-of-the-art multi-core implementation. 
\end{outline}
\section{Background and Related Work}

We use bold capital letters (e.g., $\X$) to denote matrices (zero indexed). We use $\bm{X}_u$, $\bm{X}_{:,v}$ and $\bm{X}_{u,v}$ to denote a row, a column and an element of $\bm{X}$, respectively. 
We use $\bm{X}\left[a:b,:\right]$ to retrieve a sub-matrix by extracting rows of $\bm{X}$ (from the $a^{\text{th}}$ to the $(b-1)^{\text{th}}$ row). Similarly, we use $\bm{X}\left[:,a:b\right]$ to retrieve a sub-matrix by extracting the columns. 

We use superscript $\trans$ to denote matrix transpose. Superscript within brackets $\paren{\cdot}$ denote index of a GCN layer. 

\subsection{Graph Convolutional Networks}
\label{sec: back gcn}
Graph Convolutional Networks (GCNs) are built
by extending the convolution operation defined on grid matrices to unstructured graphs
\cite{gcn}.
The input to a GCN is an un-directed, node attributed graph $\Gg\paren{\Vg,\Eg,\X}$, where $\Vg$ and $\Eg$ denote the set of all nodes and edges. Each node is attributed by a length-$f$ feature vector, so $\X\in \mathbb{R}^{\size{\Vg}\times f}$ denotes the feature matrix of $\Vg$. 
A GCN embeds each graph node into a $f'$-dimensional vector.
So the output of a GCN is simply an embedding matrix $\X'\in\mathbb{R}^{\size{\Vg}\times f'}$, generated by information from both the node feature and the node connections.

A GCN consists of multiple graph convolutional layers. 
Figure \ref{fig: overview gcn} visualizes layer $\paren{\ell+1}$, where $0\leq\ell\leq L-1$. The layer has $\size{\Vg}$ input nodes, each attributed by a length-$f^{\paren{\ell}}$ vector $\paren{\bm{X}_u^{\paren{\ell}}}^\trans$. 
Note, for the first layer (i.e., $\ell=0$), $\bm{X}^{\paren{0}}=\bm{X}$, and $f^{\paren{0}}=f$. 
For the last layer, $\bm{X}^{\paren{L}}=\bm{X}'$ and $f^{\paren{L}}=f'$.
The layer performs the following operations to obtain the length-$f^{\paren{\ell+1}}$ output vectors $\paren{\bm{X}_u^{\paren{\ell+1}}}^\trans$:

\paragraph{Feature aggregation} 
As shown in Figure \ref{fig: overview gcn}, each node $u$ sends its feature $\paren{\bm{X}_u^{\paren{\ell}}}^\trans$ via two types of connection: the self-connection (blue) and the neighbor-connection (green). Neighbor connection is defined by the edges $\Eg$. 
The features propagated to the same destination via neighbor-connection are aggregated by vector mean. Use node $0$ as an example. 
Its input is $\paren{\bm{X}_0^{\paren{\ell}}}^\trans=\left[2,0,6\right]^\trans$, and it has neighbors $1$, $2$, $3$. This step produces two vectors for node $0$: the blue one  $\paren{\bm{X}_0^{\paren{\ell}}}^\trans$, and the green one $\frac{1}{3} \paren{\bm{X}_1^{\paren{\ell}}+\bm{X}_2^{\paren{\ell}}+\bm{X}_3^{\paren{\ell}}}^\trans$. 

\paragraph{Weight transformation}
Each node independently transforms its two vectors output from the feature aggregation step. In Figure \ref{fig: overview gcn}, we denote the transform function as $h\paren{\cdot}$, parameterized by the GCN weights. 
Transformation function for $u$ and $v$ are identical. 

Equation \ref{eq: gcn forward} defines the forward path of layer $\ell+1$
. 
$\bm{D}$ is the (diagonal) degree matrix, where $\bm{D}_{u,u}$ equals degree of node $u$, and $\bm{D}_{u,v}$ equals zero for $u\neq v$. 
$\bm{A}$ is the adjacency matrix of $\Gg$, where $\bm{A}_{u,v}$ is $1$ if $\paren{u,v}\in\Eg$, and $0$ otherwise. 
Each layer has two weight matrices: the self-weight $\bm{W}_\circ^{\paren{\ell+1}}$ and the neighbor-weight $\bm{W}_\star^{\paren{\ell+1}}$. The ``$|$'' operation concatenates two matrices column-wise. 
The feature aggregation operation corresponds to $\bm{D}^{-1}\cdot \bm{A}\cdot \bm{X}^{\paren{\ell}}$, and the transformation function $h\paren{\cdot}$ applies weights $\bm{W}_\circ^{\paren{\ell+1}}$, $\bm{W}_\star^{\paren{\ell+1}}$, concatenates the intermediate results, and applies $\func[ReLU]{}$ activation.

\begin{equation}
\label{eq: gcn forward}
    \bm{X}^{\paren{\ell+1}}=\func[ReLU]{\bm{X}^{\paren{\ell}}\cdot \bm{W}_\circ^{\paren{\ell+1}}\big\vert \bm{D}^{-1}\cdot \bm{A}\cdot \bm{X}^{\paren{\ell}}\cdot \bm{W}_\star^{\paren{\ell+1}}}
\end{equation}

The output embedding $\bm{X}'$ (or $\bm{X}^{\paren{L}}$) of a GCN is often used in downstream applications such as node classification \cite{graphsage,fastgcn,s-gcn}. 
To classify nodes, we feed $\bm{X}'$ to a Multi-Layer Perceptron (MLP). A softmax layer \cite{deeplearning} converts the MLP outputs to node labels. 

Such node classification procedure enables supervised learning of $\bm{W}_\circ^{\paren{\ell}}$, $\bm{W}_\star^{\paren{\ell}}$. 
Suppose each training node $v\in\Vg$ is provided with a ground truth label. 
During training, we calculate cross-entropy loss $\mathcal{L}$ to measure difference between the ground truth and the generated labels. 
We minimize the loss by updating $\bm{W}_\circ^{\paren{\ell}}$, $\bm{W}_\star^{\paren{\ell}}$ using gradient descent. 
Gradients with respect to layer $\ell$ parameters can be calculated from gradients of layer $\ell+1$ by chain rule:

\begin{subequations}
\label{eq: gcn backward}
  \begin{align}
    \frac{\partial \mathcal{L}}{\partial \bm{W}_\circ^{\paren{\ell}}} =& \paren{\bm{X}^{\paren{\ell-1}}}^{\trans} \cdot \func[mask]{\frac{\partial \mathcal{L}}{\partial \bm{X}^{\paren{\ell}}}\left[:,0:\frac{1}{2}f^{\paren{\ell}}\right]}\label{eq: grad wself}\\
    \frac{\partial \mathcal{L}}{\partial \bm{W}_\star^{\paren{\ell}}} =& \paren{\bm{D}^{-1} \bm{A}\bm{X}^{\paren{\ell-1}}}^{\trans}\cdot \func[mask]{\frac{\partial \mathcal{L}}{\partial \bm{X}^{\paren{\ell}}}\left[:,\frac{1}{2}f^{\paren{\ell}}:f^{\paren{\ell}}\right]}\\
    \frac{\partial \mathcal{L}}{\partial \bm{X}^{\paren{\ell-1}}} =& \func[mask]{\frac{\partial \mathcal{L}}{\partial \bm{X}^{\paren{\ell}}}\left[:,0:\frac{1}{2}f^{\paren{\ell}}\right]}\cdot \paren{\bm{W}_\circ^{\paren{\ell}}}^{\trans}\label{eq: grad X}\\\nonumber
    & + \bm{A} \bm{D}^{-1}\cdot\func[mask]{\frac{\partial \mathcal{L}}{\partial \bm{X}^{\paren{\ell}}}\left[:,\frac{1}{2}f^{\paren{\ell}}:f^{\paren{\ell}}\right]}\cdot \paren{\bm{W}_\star^{\paren{\ell}}}^{\trans}
  \end{align}
\end{subequations}

\noindent where function $\func[mask]{\cdot}$ corresponds to gradient of $\func[ReLU]{\cdot}$. Both $\func[mask]{}$ and $\func[ReLU]{}$ are light-weight elementwise functions on matrices.

\begin{figure}[!htb]
    \centering
    \definecolor{c0}{HTML}{f5f5f5}
\definecolor{c1}{HTML}{e0e0e0}
\definecolor{c2}{HTML}{9e9e9e}
\definecolor{c3}{HTML}{616161}
\definecolor{c4}{HTML}{424242}
\definecolor{cself}{HTML}{483399} 
\definecolor{cneigh}{HTML}{00801a} 

\tikzset{
    archnode/.style={thick,circle,c4,draw=c4,fill=c0},
}
\begin{tikzpicture}[scale=0.95,
outer/.style={draw=gray,dashed,fill=green!1,thick},
>={Stealth[inset=0pt,
length=4pt,angle'=45,round]},scale=1.,
sty edge sel/.style={ultra thick,color=c4},
sty node sel/.style={ultra thick,circle,c4,draw=c4,fill=c0,minimum size=\sizenode},
sty node unsel/.style={circle,c2,draw=c2,fill=c0,minimum size=\sizenode},
sty edge unsel/.style={color=c2},
sty h trans/.style={fill=c0,draw=c4,minimum size=\sizenode, thick},
]

\def\sizenode{8};
\def\xoff{2.};
\node[sty node sel] (0) at (1.5+\xoff,0.7) {0};
\node[sty node sel] (1) at (0+\xoff,0) {1};
\node[sty node sel] (2) at (1.5+\xoff,-.7) {2};
\node[sty node sel] (3) at (3+\xoff,0.1) {3};
\node[sty node sel] (4) at (5+\xoff,0.7) {4};
\node[sty node sel] (5) at (4.5+\xoff,-0.7) {5};

\def\gcnx{1};
\def\gcny{-6};
\def\offsetx{1.3};
\def\offsety{1.3};
\def\sizenodegcn{6};
\def\scalegcn{0.9};

\node[archnode,minimum size=\sizenodegcn,scale=\scalegcn] (g00) at (\gcnx,\gcny) {0};
\node[archnode,minimum size=\sizenodegcn,scale=\scalegcn] (g01) at (\gcnx+\offsetx,\gcny) {1};
\node[archnode,minimum size=\sizenodegcn,scale=\scalegcn] (g02) at (\gcnx+2*\offsetx,\gcny) {2};
\node[archnode,minimum size=\sizenodegcn,scale=\scalegcn] (g03) at (\gcnx+3*\offsetx,\gcny) {3};
\node[archnode,minimum size=\sizenodegcn,scale=\scalegcn] (g04) at (\gcnx+4*\offsetx,\gcny) {4};
\node[archnode,minimum size=\sizenodegcn,scale=\scalegcn] (g05) at (\gcnx+5*\offsetx,\gcny) {5};

\def\belowvec{.07cm};
\node[below=\belowvec of g00.south,align=center] (0 label) {$\paren{\bm{X}_0^{\paren{\ell}}}^\trans$\\$\begin{bmatrix}2\\ 0\\ 6\end{bmatrix}$};
\node[below=\belowvec of g01.south,align=center] (1 label) {$\paren{\bm{X}_1^{\paren{\ell}}}^\trans$\\$\begin{bmatrix}6\\ 2\\ 2\end{bmatrix}$};
\node[below=\belowvec of g02.south,align=center] (2 label) {$\paren{\bm{X}_2^{\paren{\ell}}}^\trans$\\$\begin{bmatrix}2\\ 6\\ 6\end{bmatrix}$};
\node[below=\belowvec of g03.south,align=center] (3 label) {$\paren{\bm{X}_3^{\paren{\ell}}}^\trans$\\$\begin{bmatrix}4\\ 4\\ 4\end{bmatrix}$};
\node[below=\belowvec of g04.south,align=center] (4 label) {$\paren{\bm{X}_4^{\paren{\ell}}}^\trans$\\$\begin{bmatrix}4\\ 0\\ 2\end{bmatrix}$};
\node[below=\belowvec of g05.south,align=center] (5 label) {$\paren{\bm{X}_5^{\paren{\ell}}}^\trans$\\$\begin{bmatrix}0\\ 2\\ 2\end{bmatrix}$};

\node[scale=\scalegcn] (g10) at (\gcnx,\gcny+\offsety) {\textcolor{cself}{$\begin{bmatrix}2\\ 0\\ 6\end{bmatrix}$}\textcolor{cneigh}{$\begin{bmatrix}4\\ 4\\ 4\end{bmatrix}$}};
\node[scale=\scalegcn] (g11) at (\gcnx+\offsetx,\gcny+\offsety) {\textcolor{cself}{$\begin{bmatrix}6\\ 2\\ 2\end{bmatrix}$}\textcolor{cneigh}{$\begin{bmatrix}2\\ 3\\ 6\end{bmatrix}$}};
\node[scale=\scalegcn] (g12) at (\gcnx+2*\offsetx,\gcny+\offsety) {\textcolor{cself}{$\begin{bmatrix}2\\ 6\\ 6\end{bmatrix}$}\textcolor{cneigh}{$\begin{bmatrix}4\\ 2\\ 4\end{bmatrix}$}};
\node[scale=\scalegcn] (g13) at (\gcnx+3*\offsetx,\gcny+\offsety) {\textcolor{cself}{$\begin{bmatrix}4\\ 4\\ 4\end{bmatrix}$}\textcolor{cneigh}{$\begin{bmatrix}2\\ 2\\ 4\end{bmatrix}$}};
\node[scale=\scalegcn] (g14) at (\gcnx+4*\offsetx,\gcny+\offsety) {\textcolor{cself}{$\begin{bmatrix}4\\ 0\\ 2\end{bmatrix}$}\textcolor{cneigh}{$\begin{bmatrix}2\\ 3\\ 3\end{bmatrix}$}};
\node[scale=\scalegcn] (g15) at (\gcnx+5*\offsetx,\gcny+\offsety) {\textcolor{cself}{$\begin{bmatrix}0\\ 2\\ 2\end{bmatrix}$}\textcolor{cneigh}{$\begin{bmatrix}4\\ 2\\ 3\end{bmatrix}$}};

\def\texth{$h\paren{\cdot}$}
\node[sty h trans, scale=\scalegcn] (g20) at (\gcnx,\gcny+2*\offsety) {\texth};
\node[sty h trans, scale=\scalegcn] (g21) at (\gcnx+\offsetx,\gcny+2*\offsety) {\texth};
\node[sty h trans, scale=\scalegcn] (g22) at (\gcnx+2*\offsetx,\gcny+2*\offsety) {\texth};
\node[sty h trans, scale=\scalegcn] (g23) at (\gcnx+3*\offsetx,\gcny+2*\offsety) {\texth};
\node[sty h trans, scale=\scalegcn] (g24) at (\gcnx+4*\offsetx,\gcny+2*\offsety) {\texth};
\node[sty h trans, scale=\scalegcn] (g25) at (\gcnx+5*\offsetx,\gcny+2*\offsety) {\texth};

\foreach \i in {0,...,5}
{
    \node (out\i) at (\gcnx+\i*\offsetx,\gcny+2.8*\offsety) {$\paren{\bm{X}_\i^{\paren{\ell+1}}}^\trans$};
}

\foreach \i in {0,...,5}
{
    \draw [-latex,c4,draw=cself] (g0\i.north) -- (g1\i.south);
}

\foreach \i in {1,2,3}
{
    \draw [-latex,c4,draw=cneigh] (g00.north) -- (g1\i.south);
}
\foreach \i in {0,2}
{
    \draw [-latex,c4,draw=cneigh] (g01.north) -- (g1\i.south);
}
\foreach \i in {0,1,3}
{
    \draw [-latex,c4,draw=cneigh] (g02.north) -- (g1\i.south);
}
\foreach \i in {0,2,4,5}
{
    \draw [-latex,c4,draw=cneigh] (g03.north) -- (g1\i.south);
}
\foreach \i in {3,5}
{
    \draw [-latex,c4,draw=cneigh] (g04.north) -- (g1\i.south);
}
\foreach \i in {3,4}
{
    \draw [-latex,c4,draw=cneigh] (g05.north) -- (g1\i.south);
}

\draw [-latex,c4,fill=c4] (g10.north) -- (g20.south);
\draw [-latex,c4,fill=c4] (g11.north) -- (g21.south);
\draw [-latex,c4,fill=c4] (g12.north) -- (g22.south);
\draw [-latex,c4,fill=c4] (g13.north) -- (g23.south);
\draw [-latex,c4,fill=c4] (g15.north) -- (g25.south);
\draw [-latex,c4,fill=c4] (g14.north) -- (g24.south);

\foreach \i in {0,...,5}
{
    \draw[-latex,c4,fill=c4] (g2\i.north) -- (out\i.south);
}

\path
    (1) [bend left=15] edge [sty edge sel] node [right] {} (0)
    (0) [bend left] edge [sty edge sel] node [right] {} (3)
    (1) [bend right] edge [sty edge sel] node [right] {} (2)
    (2) [bend right] edge [sty edge sel] node [right] {} (3)
    (0) [bend right=10] edge [sty edge sel] node [right] {} (2)
    (3) [bend left] edge [sty edge sel] node [right] {} (4)
    (3) [bend right] edge [sty edge sel] node [right] {} (5)
    (4) [bend right] edge [sty edge sel] node  {} (5);
    
\node[below=.1cm of $(2.south)!0.5!(5.south)$,anchor=north] {Training graph $\Gg\paren{\Vg,\Eg,\bm{X}}$};
\node[below=.1cm of $(2 label.south)!0.5!(3 label.south)$,anchor=north] {One GCN layer based on $\Gg$};

\end{tikzpicture}
    \caption{Overview of the GCN model}
    \label{fig: overview gcn}
\end{figure}
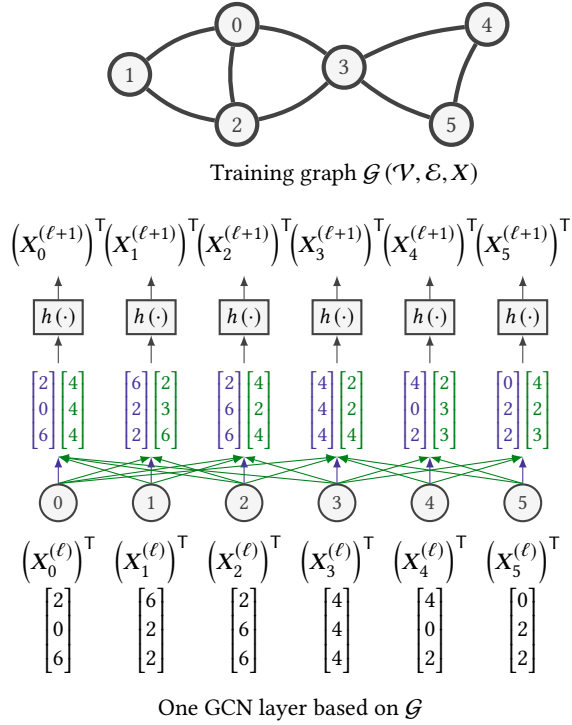

For the node classifier, forward pass of a MLP layer is simply $\bm{X}_\text{MLP}^\text{out}=\func[ReLU]{\bm{W}_\text{MLP}\cdot \bm{X}_\text{MLP}^\text{in}}$, where $\bm{X}_\text{MLP}^\text{in}$ and $\bm{X}_\text{MLP}^\text{out}$ are the input and output features for all $\Vg$, and $\bm{W}_\text{MLP}$ is the layer weight. 
Backward pass of a MLP layer performs computation similar to Equations \ref{eq: grad wself} and \ref{eq: grad X}. 
For the softmax layer, its forward pass involves computation of exponential function, and its backward pass under cross-entropy loss only involves matrix subtraction \cite{grad_softmax}. 

To train GCNs on large graphs, minibatches need to be constructed.
Unlike the case of CNN training where images are independent and identically distributed, training samples of GCNs (i.e., graph nodes) depends on each other due to edges. 
Thus, formulating GCN minibatches is challenging. There have been various techniques to sample minibatches \cite{graphsage,fastgcn,s-gcn,as-gcn,ipdps19,graphsaint}. 
Section \ref{sec: algo selection} analyzes the feasibility of hardware implementation for these techniques.

\subsection{Deep Learning Training Accelerators}

Various FPGA-based accelerators \cite{cnn_asap,cnn_compressed,cnn_cluster,cnn_fpl19,cnn_fpga19} have been proposed to train CNNs. 
The work in \cite{cnn_asap} proposes a modular design based on the types of layer operations, and improves performance via reconfiguration. 
The work in \cite{cnn_cluster} proposes a scalable framework to training CNNs on multi-FPGA clusters. Its partitioning and mapping strategy ensures load-balance. 
The works in \cite{cnn_compressed, cnn_fpga19} accelerate training by model compression. The reduced model size alleviates the burden on BRAM and thus improves resource utilization. 

To accelerate GCN training, the work in \cite{ipdps19} proposes parallelization techniques for multi-core platform. It partitions features to increase cache-hit of each core. 
Although GCNs are an extension of CNNs to graphs, challenges to accelerate GCNs are significantly different. 
GCNs require both sparse and dense matrix operations. Apart from intensive computation, GCN accelerators need to address issues such as irregular memory access and load-balance.

\subsection{Graph Analytics Accelerators}

Many accelerator designs \cite{fpgp,foregraph,shijie_fccm,lijing_graph,lijing_graph2} have been proposed for traditional graph analytic problems, such as PageRank (PR), single source shortest path (SSSP) and breadth first search (BFS). 
The works in \cite{fpgp,shijie_fccm} process input graphs in a two-phased gather-scatter manner, and utilize graph partitioning to improve access locality. 
The work in \cite{foregraph} extends the above memory optimization to multi-FPGA platforms, and the works in \cite{lijing_graph,lijing_graph2} propose optimization specific to HBM and HMC to further boost FPGA performance. 

The memory optimizations proposed in the above works do not directly apply to GCN accelerators. 
First of all, traditional graph analytics often propagate scalars along the graph edges, while GCNs propagate long feature vectors. Thus, although in both cases memory accesses are irregular, the access patterns are very different. 
Secondly, traditional graph analytics often propagate information within the full graph, while GCNs propagate within minibatches. 
Thus, techniques such as graph partitioning may not be effective since minibatch size is much smaller than the full graph size. 

\section{Training Algorithm}

We observe from the forward pass (Equation \ref{eq: gcn forward}) and the backward pass (Equation \ref{eq: gcn backward}), that GCN training involves three types of matrix product $\bm{P}\cdot \bm{Q}$:
\begin{enumerate*}
\item $\bm{P}$ being dense and $\bm{Q}$ being dense,
\item $\bm{P}$ being binary sparse and $\bm{Q}$ being dense, and
\item $\bm{P}$ being diagonal and $\bm{Q}$ being dense.
\end{enumerate*}
From acceleration perspective, type (1) operation is computationally intensive and thus requires massive parallelization of DSPs (Section \ref{sec: arch systolic}); type (2) operation incurs large number of irregular BRAM accesses and motivates the graph topology based optimizations (Section \ref{sec: algo redundancy});
type (3) operation is equivalent to scaling each row of $\bm{Q}$ by the diagonal elements of $\bm{P}$, and thus its contribution to the total training cost is negligible. 
Other operations (e.g., $\func[mask]{}$, $\func[ReLU]{}$, concatenation, sub-matrix extraction) are light-weight and straightforward to implement, and we do not discuss them in detail. 


\subsection{Algorithm Selection}
\label{sec: algo selection}

For efficient training on large graphs, the first step is to reduce external memory communication via minibatches.
Ideally, a minibatch should be sampled such that all data required for gradient calculation (e.g., $\bm{X}^{\paren{\ell}}$ of the minibatch nodes) fits in BRAM.
Among the numerous algorithms \cite{graphsage,fastgcn,s-gcn,as-gcn,ipdps19}, 
some return minibatches not suitable for hardware execution. 
We categorize these algorithms and analyze the hardware cost on external memory accesses.

\paragraph{Minibatch by sampling GCN layers \cite{graphsage,fastgcn,s-gcn,as-gcn}}
Sampling algorithms in this category traverse GCN layers backward from layer-$L$ outputs to layer-$1$ inputs to select minibatch nodes. 
Assume $b$ nodes of layer $\ell+1$ are selected. The sampler then take $\alpha\cdot b$ nodes of layer $\ell$ based on the inter-layer connections and/or the features of the $\overline{d}\cdot b$  neighbors in layer-$\ell$ 
(where $\overline{d}$ is the average node degree).
Specifically, the sampling of \cite{graphsage} does not depend on neighbor features and $10\leq \alpha\leq 50$ in general.
\cite{s-gcn} uses the neighbor features as ``historical activation'' and $\alpha=2$. 
\cite{as-gcn} computes the node probability by the neighbor features and $\alpha=1$ on average. 
\cite{fastgcn} does not require neighbor features and $\alpha=1$.
In summary, suppose we sample $b_0$ output nodes of layer $L$.
Then the sampler reads the neighbor features of each layer (if required) to eventually return $\alpha^L\cdot b_0$ input nodes in layer $1$. 
During training, we read features of the $\alpha^L\cdot b_0$ input nodes of layer $1$, and then perform forward propagation to compute output features of the layer-$\ell$ sampled nodes ($1\leq \ell\leq L$). 




\paragraph{Minibatch by sampling training graph \cite{ipdps19}}
The algorithm samples from $\Gg$ instead of the GCN layers. 
Given a graph sampling algorithm (e.g., multi-dimensional random walk \cite{mrw}), \cite{ipdps19} returns a subgraph $\Gg[s]\paren{\Vg[s],\Eg[s]}$ induced from $\Gg\paren{\Vg,\Eg}$ (where $\size{\Vg[s]}\ll \size{\Vg}$).
The minibatch contains the same set of nodes $\Vg[s]$ for all the GCN layers. In other words, for each minibatch, \cite{ipdps19} constructs a \emph{complete} $L$-layer GCN from $\Gg[s]$, with the layer nodes defined by $\Vg[s]$ and the layer connections defined by $\Eg[s]$. 
Note that unlike \cite{s-gcn,as-gcn}, the graph sampler of \cite{ipdps19} requires no information from node features. 


Suppose we were to implement the above minibatch training algorithms on FPGA. 
Since the full $\bm{X}^{\paren{\ell}}$ is too large to fit on-chip, we need to read from external memory the following data: 
\begin{enumerate*} 
\item layer-$1$ input features of the minibatch nodes, and
\item layer-$\ell$ input features of the minibatch neighbor nodes (if required).
\end{enumerate*}
For ease of analysis, assume the feature sizes of each layer are the same, and let the cost of transferring one feature vector be $1$. 
Also, let the cost of aggregation and computing $h\paren{\cdot}$ for one node (see Figure \ref{fig: overview gcn}) be $1$. Ignore the computation cost of feature aggregation. 
Table \ref{tab: algo sel} summarizes the ratio between on-chip computation cost and off-chip communication cost, where for the ``Value'' row, we set $L=2$, $\overline{d}=15$ and $\alpha$ as $25$, $1$, $2$, $1$ for \cite{graphsage}, \cite{fastgcn}, \cite{s-gcn}, \cite{as-gcn} respectively. 
Algorithms of low computation-communication ratio (i.e., \cite{graphsage,as-gcn,s-gcn}) impede the development of efficient hardware accelerators.
For the remaining two algorithms, the complexity of the sampler of \cite{ipdps19} is much lower than that of \cite{fastgcn}, and the training accuracy of \cite{ipdps19} is higher. 
Thus, we select \cite{ipdps19} as the target training algorithm for acceleration.

\begin{table}[!ht]
\caption{Computation-communication ratio of various training algorithms. Let $B=\sum_{\ell=0}^{L-1}\alpha^{\ell} b_0$, and $B'=L \cdot b_0$. }
\vspace{-.3cm}
    \centering
    \begin{tabular}{rccccc}
        \toprule
        & \cite{graphsage} & \cite{fastgcn} & \cite{s-gcn} & \cite{as-gcn} & \cite{ipdps19}\\
        \midrule
        \midrule
        Expression & $\frac{B}{\alpha^L b_0}$ & $\frac{B}{\alpha^L b_0}$ & $\frac{B}{\alpha^L b_0+\overline{d}B}$ & $\frac{B}{\alpha^L b_0+\overline{d}B}$ & $\frac{B'}{b_0}$ \\
        Value & $0.04$ & $2$ & $0.06$ & $0.06$ & $2$ \\
        \bottomrule
    \end{tabular}
    \label{tab: algo sel}
\end{table}

\paragraph{Remark on notation}
Since we select the algorithm of \cite{ipdps19}, in the following sections, we mainly focus on GCN built on a subgraph $\Gg[s]$. 
A symbol with an ``$s$'' subscript means it is related with a subgraph. 
Thus, forward and backward pass of the GCN can be updated by simply replacing $\bm{A}$, $\bm{D}$, $\bm{X}^{\paren{\cdot}}$ with $\bm{A}_s$, $\bm{D}_s$, $\bm{X}_s^{\paren{\cdot}}$ in Equations \ref{eq: gcn forward} and \ref{eq: gcn backward}. 

\subsection{Redundancy Reduction}
\label{sec: algo redundancy}

Before designing the hardware architecture, we first optimize the redundancy in minibatch training from an algorithm perspective. 
Take Figure \ref{fig: redundancy a} as an example. 
Suppose the graph sampler returns a subgraph $\Gg[s]$ (of $\Gg$ in Figure \ref{fig: overview gcn}) as the minibatch. 
In each GCN layer, nodes $1$, $2$, $3$ and $4$ aggregate neighbor features, with node $0$ calculating $\frac{1}{3}\paren{\bm{X}_1^\trans+\bm{X}_2^\trans+\bm{X}_3^\trans}$, node $1$ calculating $\frac{1}{2} \paren{\bm{X}_0^\trans+\bm{X}_2^\trans}$, node $2$ calculating $\frac{1}{3} \paren{\bm{X}_0^\trans+\bm{X}_1^\trans+\bm{X}_3^\trans}$ and node $3$ calculating $\frac{1}{2} \paren{\bm{X}_0^\trans+\bm{X}_2^\trans}$ (omitting superscript $\paren{\ell}$ for simplicity). 
This corresponds to 6 vector additions and 10 vector reads in total. 
Observe that the vector pair $\paren{\bm{X}_0^\trans, \bm{X}_2^\trans}$ appears in the aggregation of both nodes $1$ and $3$. Similarly, the pair $\paren{\bm{X}_1^\trans,\bm{X}_3^\trans}$ is aggregated by both nodes $0$ and $2$. 
Thus, we can perform pre-processing to compute the partial sum $\bm{X}_0^\trans+\bm{X}_2^\trans$ and $\bm{X}_1^\trans+\bm{X}_3^\trans$. 
The aggregation of the four nodes after pre-processing requires only 2 additions and 4 reads. 
In general, even considering the pre-processing cost, we can still significantly reduce both computation and communication of feature aggregation.
The redundancy reduction helps us achieve perfect load-balance of the \emph{complete pipeline} for a wide range of FPGAs (Section \ref{sec: load balance}). 

\begin{figure}[!htb]
    \centering
    \begin{subfigure}[b]{0.155\textwidth}
    \definecolor{c0}{HTML}{f5f5f5}
\definecolor{c1}{HTML}{e0e0e0}
\definecolor{c2}{HTML}{9e9e9e}
\definecolor{c3}{HTML}{616161}
\definecolor{c4}{HTML}{424242}
\definecolor{cself}{HTML}{1a0080}
\definecolor{cneigh}{HTML}{014d0a}

\tikzset{
    archnode/.style={thick,circle,c4,draw=c4,fill=c0},
}
\begin{tikzpicture}[
outer/.style={draw=gray,dashed,fill=green!1,thick},
>={Stealth[inset=0pt,
length=4pt,angle'=45,round]},scale=1.,
sty edge sel/.style={ultra thick,color=c4},
sty node sel/.style={ultra thick,circle,c4,draw=c4,fill=c0,minimum size=\sizenode},
sty node unsel/.style={circle,c2,draw=c2,fill=c0,minimum size=\sizenode},
sty edge unsel/.style={color=c2},
sty h trans/.style={fill=c0,draw=c4,minimum size=\sizenode, thick},
]

\def\sizenode{10};
\def\xoff{.85};
\node[sty node sel] (0) at (0,0) {0};
\node[sty node sel] (1) at (-\xoff,-\xoff) {1};
\node[sty node sel] (2) at (0,-2*\xoff) {2};
\node[sty node sel] (3) at (\xoff,-\xoff) {3};

\path
    (1) [] edge [sty edge sel] node [right] {} (0)
    (0) [] edge [sty edge sel] node [right] {} (3)
    (1) [] edge [sty edge sel] node [right] {} (2)
    (2) [] edge [sty edge sel] node [right] {} (3)
    (0) [] edge [sty edge sel] node [right] {} (2);

\end{tikzpicture}
    \caption{$\Gg[s]$}
    \label{fig: redundancy a}
    \end{subfigure}
    \begin{subfigure}[b]{0.155\textwidth}
    \definecolor{c0}{HTML}{f5f5f5}
\definecolor{c1}{HTML}{e0e0e0}
\definecolor{c2}{HTML}{9e9e9e}
\definecolor{c3}{HTML}{616161}
\definecolor{c4}{HTML}{424242}
\definecolor{cself}{HTML}{1a0080}
\definecolor{cneigh}{HTML}{014d0a}
\definecolor{cred}{HTML}{dc143c}

\tikzset{
    archnode/.style={thick,circle,c4,draw=c4,fill=c0},
}
\begin{tikzpicture}[
outer/.style={draw=gray,dashed,fill=green!1,thick},
>={Stealth[inset=0pt,
length=4pt,angle'=45,round]},scale=1.,
sty edge sel/.style={ultra thick,color=c4},
sty node sel/.style={ultra thick,circle,c4,draw=c4,fill=c0,minimum size=\sizenode},
sty node unsel/.style={circle,c2,draw=c2,fill=c0,minimum size=\sizenode},
sty edge unsel/.style={color=c2},
sty h trans/.style={fill=c0,draw=c4,minimum size=\sizenode, thick},
]

\def\sizenode{10};
\def\xoff{.85};
\node[sty node sel] (0) at (0,0) {0};
\node[sty node sel] (1) at (-\xoff,-\xoff) {1};
\node[sty node sel] (2) at (0,-2*\xoff) {2};
\node[sty node sel] (3) at (\xoff,-\xoff) {3};

\path
    (0) [] edge [sty edge sel] node [right] {} (2)
    (1) [] edge [sty edge sel] node [right] {} (3);

\end{tikzpicture}
    \caption{$\Gg[a]$ and $\Mg[a]$}
    \label{fig: redundancy b}
    \end{subfigure}
    \begin{subfigure}[b]{0.155\textwidth}
    \definecolor{c0}{HTML}{f5f5f5}
\definecolor{c1}{HTML}{e0e0e0}
\definecolor{c2}{HTML}{9e9e9e}
\definecolor{c3}{HTML}{616161}
\definecolor{c4}{HTML}{424242}
\definecolor{cself}{HTML}{1a0080}
\definecolor{cneigh}{HTML}{014d0a}

\tikzset{
    archnode/.style={thick,circle,c4,draw=c4,fill=c0},
}
\begin{tikzpicture}[
outer/.style={draw=gray,dashed,fill=green!1,thick},
>={Stealth[inset=0pt,
length=4pt,angle'=45,round]},scale=1.,
sty edge sel/.style={ultra thick,color=c4},
sty node sel/.style={ultra thick,circle,c4,draw=c4,fill=c0,minimum size=\sizenode},
sty node sel dashed/.style={dashed,ultra thick,circle,c4,draw=c4,fill=c0,minimum size=\sizenode},
sty node unsel/.style={circle,c2,draw=c2,fill=c0,minimum size=\sizenode},
sty edge unsel/.style={color=c2},
sty h trans/.style={fill=c0,draw=c4,minimum size=\sizenode, thick},
]

\def\sizenode{10};
\def\xoff{.85};
\node[sty node sel dashed] (u) at (0,0) {$u$};
\node[sty node sel] (1) at (-\xoff,-\xoff) {1};
\node[sty node sel] (0) at (-\xoff,-2*\xoff) {0};
\node[sty node sel] (3) at (\xoff,-\xoff) {3};
\node[sty node sel] (2) at (\xoff,-2*\xoff) {2};
\node[sty node sel,dashed] (v) at (0,-\xoff) {$v$};

\path
    (u) [draw,-latex] edge [sty edge sel] node [] {} (1)
    (u) [] edge [sty edge sel] node [right] {} (3)
    (v) [] edge [sty edge sel] node [right] {} (0)
    (v) [] edge [sty edge sel] node [right] {} (2)
    (0) [latex-latex] edge [sty edge sel] node [right] {} (2);

\end{tikzpicture}
    \caption{$\Gg[s]'$}
    \label{fig: redundancy c}
    \end{subfigure}
    \caption{Example on reducing redundancy in feature propagation, using a graph theoretical approach.}
    \label{fig: redundancy abc}
\end{figure}
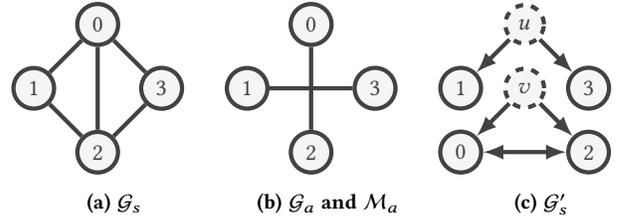


The key to reduce redundancy is to identify sets of nodes appearing frequently in the neighbor lists of $v\in\Vg[s]$. 
To simplify the problem, we aim at finding such sets of size 2 --- we identify \emph{common pairs of neighbors}. 
We first construct an un-directed \emph{aggregation graph} $\Gg[a]$ from $\Gg[s]$ by Algorithm \ref{algo: aggr graph construction}. 
Each edge $\paren{u,v}$ in $\Gg[a]$ represents potential vector sum operations between $u$ and $v$. 
The attribute of $\paren{u,w}$ consists of a set of nodes $\left\{v_1,\hdots, v_n\right\}$, meaning that the neighbor list of $v_i$ ($1\leq i\leq n$) contains both $u$ and $w$.
The weight of an edge is simply the size of its attribute set (we remove edges of weight 1). In Algorithm \ref{algo: aggr graph construction}, $\mathcal{W}_a$ and $\mathcal{D}_a$ can be implemented as a hash table. 
And $\mathcal{W}_a\left[\paren{u,w}\right]$ (or $\mathcal{D}_a\left[\paren{u,w}\right]$) returns the value corresponding to the key $\paren{u,w}$. 
Figure \ref{fig: redundancy b} shows the aggregation graph $\Gg[a]$ for our above example, where edges have weight 2. 

\begin{algorithm}
\caption{Construction of aggregation graph $\Gg[a]$}
\label{algo: aggr graph construction}
\begin{algorithmic}[1]
\renewcommand{\algorithmicrequire}{\textbf{Input:}}
\renewcommand{\algorithmicensure}{\textbf{Output:}}
\Require Subgraph $\Gg[s]\paren{\Vg[s],\Eg[s]}$
\Ensure Aggregator graph $\Gg[a]$
\State $\Vg[a]\gets \Vg[s]$
\State $\mathcal{W}_a\gets \emptyset${\color{blue}\Comment{Key-value pairs mapping edges to edge weights}}
\State $\mathcal{D}_a\gets \emptyset${\color{blue}\Comment{Key-value pairs mapping edges to edge attributes}}
\For {$v\in\Vg[a]$}
    \For {$u\in\func[neigh]{v}$}
        \For {$w\in\func[neigh]{v}\setminus u$}
            \If{Key $\paren{u,w}$ not in $\mathcal{W}_a$}
                \State Add key-value pair $\paren{\paren{u,w},1}$ to $\mathcal{W}_a$
            \Else
                \State $\mathcal{W}_a\left[\paren{u,w}\right]\gets \mathcal{W}_a\left[\paren{u,w}\right]+1$
            \EndIf
            \If{Key $\paren{u,w}$ not in $\mathcal{D}_a$}
                \State Add key-value pair $\paren{\paren{u,w},\left\{v\right\}}$ to $\mathcal{D}_a$
            \Else
                \State $\mathcal{D}_a\left[\paren{u,w}\right]\gets \mathcal{D}_a\left[\paren{u,w}\right]\cup \left\{v\right\}$
            \EndIf
        \EndFor
    \EndFor
\EndFor
\State Remove key-value pairs of weight-1 edges in $\mathcal{W}_a$ and $\mathcal{D}_a$
\State $\Gg[a]\gets$ Un-directed graph based on $\mathcal{W}_a$ and $\mathcal{D}_a$
\end{algorithmic}
\end{algorithm}

Intuitively, the next step upon getting $\Gg[a]$ is to extract the edges with largest weights, so that pre-computation of the corresponding vector sums reduces the most redundancy. 
However, there is one subtlety to notice. Suppose two edges of $\Gg[a]$, $\paren{u_i,u_j}$ and $\paren{u_j,u_k}$, have large weights, and $\left\{v_1,\hdots,v_m\right\}$, the intersection of their attributes,  is non-empty. 
Consider the aggregation of nodes $\left\{v_1,\hdots,v_m\right\}$ 
after pre-computing $\bm{x}'=\bm{X}_i^\trans+\bm{X}_j^\trans$ and $\bm{x}''=\bm{X}_j^\trans+\bm{X}_k^\trans$.
By replacing the pair $\paren{\bm{X}_i^\trans, \bm{X}_j^\trans}$ with $\bm{x}'$, the other pair $\paren{\bm{X}_j^\trans,\bm{X}_k^\trans}$ disappears in aggregation. So $\bm{x}''$ does not help reduce redundancy.
To avoid such useless pre-computation on $\bm{x}''$, a good solution is to find a \emph{maximum weight matching} of $\Gg[a]$, so that the selected pairs imply high redundancy, and share no common nodes.


\begin{theorem}
\label{thm: aggr reduction}
For feature aggregation of each layer, number of vector reads and additions can decrease by at least $\sum_{e\in \Mg[a]^*}\paren{\mathcal{W}_a[e]-2}$ and
$\sum_{e\in\Mg[a]^*}\paren{\mathcal{W}_a[e]-1}$. $\Mg[a]^*$ is maximum weight matching of $\Gg[a]$. 
\end{theorem}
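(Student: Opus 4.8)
The plan is to convert the maximum weight matching $\Mg[a]^*$ into an explicit pre-processing schedule and then bookkeep its net effect on the two cost metrics. First I would fix $\Mg[a]^* = \set{e_1,\dots,e_k}$ and, writing $e_j = \paren{u_j,w_j}$ and suppressing the layer superscript throughout, introduce the pre-computed vectors $\bm{p}_j = \bm{X}_{u_j}^\trans + \bm{X}_{w_j}^\trans$ for $j=1,\dots,k$. Producing all the $\bm{p}_j$ costs $2$ vector reads and $1$ vector addition per edge, i.e.\ $\sum_{e\in\Mg[a]^*}2$ reads and $\sum_{e\in\Mg[a]^*}1$ additions in total. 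A preliminary fact I would record is $\size{\mathcal{D}_a[e]} = \mathcal{W}_a[e]$ for every surviving edge $e$: by Algorithm~\ref{algo: aggr graph construction}, each node $v$ that causes an increment of $\mathcal{W}_a[e]$ simultaneously inserts exactly the single element $v$ into $\mathcal{D}_a[e]$, and conversely $v\in\mathcal{D}_a[\paren{u,w}]$ holds precisely when $\set{u,w}\subseteq\func[neigh]{v}$.

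Next I would charge the savings to the aggregation phase, node by node. Fix $j$ and a node $v\in\mathcal{D}_a[e_j]$; then $u_j,w_j\in\func[neigh]{v}$, so without pre-processing the neighbor-aggregation at $v$ reads its $\size{\func[neigh]{v}}\ge 2$ feature vectors and performs $\size{\func[neigh]{v}}-1$ additions, whereas after replacing the pair $\paren{\bm{X}_{u_j}^\trans,\bm{X}_{w_j}^\trans}$ by the single vector $\bm{p}_j$ inside that sum it reads $\size{\func[neigh]{v}}-1$ vectors and performs $\size{\func[neigh]{v}}-2$ additions --- a saving of exactly one read and one addition. The place where the matching hypothesis is indispensable is the claim that these savings accumulate without mutual interference: because $\Mg[a]^*$ is a matching, the endpoint pairs $\set{u_j,w_j}$ are pairwise disjoint, so if a single node $v$ lies in several attribute sets $\mathcal{D}_a[e_{j_1}],\mathcal{D}_a[e_{j_2}],\dots$ then the corresponding pairs occupy disjoint slots of $\func[neigh]{v}$ and each can be collapsed to its $\bm{p}$-vector independently of the others, yielding one read-and-addition saving per such index. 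Summing this saving over every incidence pair $(v,j)$ with $v\in\mathcal{D}_a[e_j]$ and using $\size{\mathcal{D}_a[e_j]}=\mathcal{W}_a[e_j]$, the aggregation phase sheds $\sum_{e\in\Mg[a]^*}\mathcal{W}_a[e]$ reads and the same number of additions.

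It then remains to subtract the pre-processing overhead from the aggregation-phase savings: the net decrease realized by this schedule is $\sum_{e\in\Mg[a]^*}\mathcal{W}_a[e]-\sum_{e\in\Mg[a]^*}2 = \sum_{e\in\Mg[a]^*}\paren{\mathcal{W}_a[e]-2}$ vector reads and $\sum_{e\in\Mg[a]^*}\mathcal{W}_a[e]-\sum_{e\in\Mg[a]^*}1 = \sum_{e\in\Mg[a]^*}\paren{\mathcal{W}_a[e]-1}$ vector additions, which establishes the ``at least'' bound; it is genuinely only a lower bound because we exploit recurring neighbor \emph{pairs} only (larger recurring sets, or re-use of the $\bm{p}_j$ across edges sharing a vertex, would reduce the cost further) and because an architecture whose feature buffer already holds $\bm{X}_{u_j}^\trans$ or $\bm{X}_{w_j}^\trans$ can avoid some of the $2k$ pre-processing reads. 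I expect the main obstacle to be making the ``additivity over matching edges'' step airtight --- i.e.\ verifying that collapsing one pair inside $v$'s neighbor list never destroys another matched pair that $v$ also realizes, which is precisely the $\bm{x}'/\bm{x}''$ failure mode flagged just before the theorem and which is excluded exactly because two edges of a matching share no endpoint.
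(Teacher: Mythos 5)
Your proposal is correct and follows essentially the same route as the paper's proof: charge one saved read and one saved addition to each of the $\mathcal{W}_a[e]$ occurrences of a matched pair, use the vertex-disjointness of the matching to make these savings additive, and subtract the pre-computation overhead of $2$ reads and $1$ addition per edge to get $\sum_{e\in \Mg[a]^*}\paren{\mathcal{W}_a[e]-2}$ and $\sum_{e\in\Mg[a]^*}\paren{\mathcal{W}_a[e]-1}$. Your write-up merely makes explicit two points the paper leaves implicit (that $\size{\mathcal{D}_a[e]}=\mathcal{W}_a[e]$, and that multiple matched pairs inside one neighbor list occupy disjoint slots), which is a welcome but not substantively different elaboration.
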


\begin{proof}
We first consider the reduction in number of reads. Since edges in a matching are disjoint, for each $\paren{u,v}\in\Mg[a]^*$, accessing $\bm{X}_u^\trans+\bm{X}_v^\trans$ instead of $\bm{X}_u^\trans$ and $\bm{X}_v^\trans$ saves $\paren{2-1}\cdot \mathcal{W}_a\left[\paren{u,v}\right]$ number of reads.
The pre-computation of $\bm{X}_u^\trans+\bm{X}_v^\trans$ consumes 2 reads. In sum, the total reduction is $\sum\paren{\mathcal{W}_a[e]-2}$.
The proof for reduction in number of additions can be similarly derived.
\end{proof}

Note that Theorem \ref{thm: aggr reduction} considers the total cost including any pre-computation cost. 
Also, although polynomial complexity algorithm \cite{max_matching} exists for computing the maximum weight matching, it is still too expensive in practice. 
Thus, we propose a greedy approach (Algorithm \ref{algo: greedy matching}) to quickly compute a good matching $\Mg[a]$. 

\begin{algorithm}
\caption{Greedy approach to find a good matching $\Mg[a]$}
\label{algo: greedy matching}
\begin{algorithmic}[1]
\renewcommand{\algorithmicrequire}{\textbf{Input:}}
\renewcommand{\algorithmicensure}{\textbf{Output:}}
\Require Aggregation graph $\Gg[a]$; threshold $\theta$
\Ensure Matching $\Mg[a]$
\State $\Mg[a]\gets \emptyset$
\State $\mathbb{H}\gets$ Max-heap of edges $\Eg[a]$, ordered by edge weights
\State $\mathcal{S}\gets$ Length  $\size{\Vg[a]}$ vector of values $\func[True]{}$
\While {$\func[maxWeight]{\mathbb{H}}>\theta$}
    \State $\paren{u,v}\gets \func[extractMaxEdge]{\mathbb{H}}$
    \If {$\lnot\paren{\mathcal{S}_u\land\mathcal{S}_v}$}{\color{blue}\Comment{$\paren{u,v}$ violates edge disjointness}}
        \State \textbf{continue}
    \EndIf
    \State $\mathcal{S}_u\gets \func[False]{};\qquad \mathcal{S}_v\gets \func[False]{}$
    \State $\Mg[a]\gets \Mg[a]\cup \set*{\paren{u,v}}$
\EndWhile
\end{algorithmic}
\end{algorithm}

The next step after obtaining $\Mg[a]$ is to update the original $\Gg[s]$ to $\Gg[s]'$ (following Algorithm \ref{algo: update gs}), so that the pre-computed sums can propagate in $\Gg[s]'$. 
The idea is to merge each pair of nodes in $\Mg[a]$ into a new node, whose feature vector is returned by pre-computation. 
Compared with $\Gg[s]$, the updated $\Gg[s]'$ has more nodes ($\size{\Vg[s]'}=\size{\Vg[s]}+\size{\Mg[a]}$), but less edges. 
Note that $\Gg[s]'$ is directed, even when $\Gg[s]$ is un-directed. The example $\Gg[s]'$ is shown in Figure \ref{fig: redundancy c}.

\begin{algorithm}
\caption{Construction of the update subgraph $\Gg[s]'$}
\label{algo: update gs}
\begin{algorithmic}[1]
\renewcommand{\algorithmicrequire}{\textbf{Input:}}
\renewcommand{\algorithmicensure}{\textbf{Output:}}
\Require Original subgraph $\Gg[s]$; Matching $\Mg[a]$; Edge attribute $\mathcal{D}_a$
\Ensure Updated (directed) subgraph $\Gg[s]'\paren{\Vg[s]',\Eg[s]'}$
\State $\Vg[s]'\gets \Vg[s];\qquad\Eg[s]'\gets \Eg[s]$
\For {$\paren{u,v}\in \Mg[a]$}
    \State Assign a new node $v'$ corresponding to the edge $\paren{u,v}$
    \State $\Vg[s]'\gets \Vg[s]'\cup\set{v'}$
    \For {$w\in \mathcal{D}_a\left[\paren{u,v}\right]$}
        \State Remove $\paren{u,w}$ from $\Eg[s]'$
        \State Replace $\paren{v,w}$ with $\paren{v',w}$ in $\Eg[s]'$
    \EndFor
\EndFor

\end{algorithmic}
\end{algorithm}

\paragraph{Complexity analysis}
Complexity of Algorithms \ref{algo: aggr graph construction}, \ref{algo: greedy matching} and \ref{algo: update gs} are low compared with the feature aggregation. 
Complexity of Algorithm \ref{algo: aggr graph construction} is $\mathcal{O}\paren{\size{\Eg[a]}}=\mathcal{O}\paren{\sum_{v\in\Vg[s]}d_{v}^2}$, where $d_v$ is $v$'s degree. 
Although in the worst case, $\mathcal{O}\paren{\size{\Eg[a]}}=\mathcal{O}\paren{\size{\Vg[s]}\cdot d_\text{max}^2}$, for the graphs in practice, we may assume $\mathcal{O}\paren{\size{\Eg[a]}}=\mathcal{O}\paren{\size{\Vg[s]}\cdot \overline{d}^2}=\mathcal{O}\paren{\size{\Eg[s]}\cdot \overline{d}}$, where $d_\text{max}$ and $\overline{d}$ are the max and average degree of $\Gg[s]$. 
Complexity of Algorithm \ref{algo: greedy matching} is $\mathcal{O}\paren{\size{\Eg[a]}+N\log\size{\Eg[a]}}$, where
the first term counts for line 2, and the second term is for the loop from line 4 to 9. 
Number of times max-edge is extracted from heap, $N$, depends on the threshold $\theta$.
Typically, $N\ll \size{\Eg[a]}$, $\size{\Vg[s]}<5000$, $\overline{d}<20$.
Overall complexity of Algorithms \ref{algo: aggr graph construction} and \ref{algo: greedy matching} is much less than the complexity of single layer feature aggregation (i.e., $\mathcal{O}\paren{\size{\Eg[s]}\cdot f}$, where the feature length $f$ is to the order of $10^2$ to $10^3$). 
For Algorithm \ref{algo: update gs}, lines 6 and 7 each takes at most $d_w$ operations if we simply scan the neighbor list of $w$. In return, we save one vector sum operation for $u$ and $v$. 
Considering $\overline{d}\ll f$, overhead of Algorithm \ref{algo: update gs} is thus negligible compared with the benefit in redundancy reduction. 

It is worth noticing that \begin{enumerate*}
\item one time transformation from $\Gg[s]$ to $\Gg[s]'$ benefits $3L$ number of feature aggregation in a $L$-layer GCN, and 
\item such transformation, which only involves integer operations, reduces floating point arithmetics during feature aggregation. 
\end{enumerate*}
These two observations further justify the cost of Algorithms \ref{algo: aggr graph construction}, \ref{algo: greedy matching} and \ref{algo: update gs}.

\paragraph{Iterative redundancy reduction}
After obtaining $\Gg[s]'$, redundancy still exists when aggregating features of $\Gg[s]'$. 
Viewing $\Gg[s]'$ as the new $\Gg[s]$, we can again apply Algorithms \ref{algo: aggr graph construction}, \ref{algo: greedy matching} and \ref{algo: update gs} to obtain a $\Gg[s]''$. This process can continue until few edges can be reduced (i.e., the matching is small). 
Note that although the training graph (and thus $\Gg[s]$) is often un-directed, Algorithms \ref{algo: aggr graph construction}, \ref{algo: greedy matching} and \ref{algo: update gs} still apply when $\Gg[s]$ is directed. 
Therefore, iterative redundancy reduction is feasible. 

For sake of notation, define one \emph{round} of redundancy reduction as one invocation of Algorithms \ref{algo: aggr graph construction}, \ref{algo: greedy matching} and \ref{algo: update gs}. Define the subgraph output by the last round as $\Gg[s]^\#$, and its adjacency matrix as $\bm{A}_s^\#$. Define the set of matchings in all rounds as $\mathbb{M}_a=\set{\Mg[a], \Mg[a]', \Mg[a]'',\hdots}$.
Define $\gamma_\text{add}\coloneqq\frac{\func[numAdd]{\Gg[s]^\#}+\sum_{\Mg\in\mathbb{M}_a}\size{\Mg}}{\func[numAdd]{\Gg[s]}}$ as the redundancy reduction rate for additions, where $\func[numAdd]{\Gg}= \sum_{v\in \Vg;d_v\geq 1}\paren{d_v-1}$ denotes the number of additions to aggregate by traversing $\Gg$'s neighbor lists. 
Define $\gamma_\text{read}\coloneqq \frac{\func[numRead]{\Gg[s]^\#}+\sum_{\Mg\in\mathbb{M}_a}2\size{\Mg}}{\func[numRead]{\Gg[s]}}$ as the redundancy reduction rate for BRAM reads, where $\func[numRead]{\Gg}=\sum_{v\in\Vg}d_v=\overline{d}\size{\Vg}$ denotes number of reads to aggregate features.  

\section{Accelerator Design}
\label{sec: arch}

\subsection{Overview}
\label{sec: arch overview}

We first partition the workload between FPGA and CPU. 
We let FPGA execute the computation intensive operations, and leave the communication intensive parts to CPU. 
A natural partition is: CPU performs graph sampling, and then converts $\Gg[s]$ to $\Gg[s]^\#$; 
FPGA performs the forward and backward pass defined by Equations \ref{eq: gcn forward} and \ref{eq: gcn backward}, where feature propagation of each layer is based on $\Gg[s]^\#$.
Notice that under supervised learning, the last graph convolutional layer is followed by a node classifier (see Section \ref{sec: back gcn}). The softmax layer at the end of the classifier and the cross-entropy loss require computation of exponential and logarithmic functions. 
Since softmax and loss contribute to a negligible portion of the total workload and their accurate calculation requires complicated hardware \cite{fpga_log, fpga_exp}, we assign their computation to CPU. 
Section \ref{sec: schedule} describes the scheduling.

To improve overall training throughput, we need to
\begin{enumerate*}
\item reduce the overhead in external memory access, and 
\item increase the utilization of the on-chip resources.
\end{enumerate*}
The first goal can be achieved by setting the minibatch parameters so that the size of $\Gg[s]$ is appropriate for the BRAM capacity. 
Ideally, once receiving the initial node features (i.e., $\bm{X}_s^{\paren{0}}$ of $\Vg[s]$) and the connection information (i.e., $\bm{D}_s$, $\bm{A}_s^\#$, $\mathbb{M}_a$), the FPGA should propagate forward from the first graph convolutional layer to the last classifier MLP layer without accessing external memory. 
Similarly, once receiving the gradient with respect to softmax, FPGA should propagate backward without accessing external memory. 
Thus, CPU needs to communicate:

\begin{itemize}
    \item To FPGA: $\bm{X}_s^{\paren{0}}$, $\bm{D}_s$, $\bm{A}_s^{\#}$, $\mathbb{M}_a$ and $\frac{\partial \mathcal{L}}{\partial \bm{X}_\text{MLP}^\text{out}}$
    \item From FPGA: $\bm{X}_\text{MLP}^\text{out}$
\end{itemize}

\noindent And on-chip BRAM needs to store:

\begin{itemize}
    \item Node features: $\bigcup_{0\leq \ell\leq L}\set*{\bm{X}_s^{\paren{\ell}}}$
    \item Subgraph topological data: $\bm{D}_s$, $\bm{A}_s^{\#}$ and $\mathbb{M}_a$
    \item Pre-computed vector sum for pairs in $\mathbb{M}_a$
    \item Intermediate feature aggregation results
    \item Model weights: $\bigcup_{1\leq\ell\leq L} \set*{\bm{W}_\circ^{\paren{\ell}},\bm{W}_\star^{\paren{\ell}}}$ and $\bm{W}_\text{MLP}$
    \item Gradient information: $\frac{\partial \mathcal{L}}{\partial \bm{X}_s^{\paren{\ell}}}$ and $\frac{\partial \mathcal{L}}{\partial \bm{X}_s^{\paren{\ell-1}}}$, for any $2\leq \ell \leq L$
    \item Optimizer specific auxiliary data
    \item Other data (i.e., tile buffer in Section \ref{sec: arch systolic})
\end{itemize}

Note that we only need to store gradients with respect to activations of consecutive layers. When calculating $\frac{\partial \mathcal{L}}{\partial \bm{W}_\circ^{\paren{\ell}}}$ and $\frac{\partial \mathcal{L}}{\partial \bm{W}_\star^{\paren{\ell}}}$, the data $\frac{\partial \mathcal{L}}{\partial \bm{X}_s^{\paren{\ell}}}$ and $\frac{\partial \mathcal{L}}{\partial \bm{X}_s^{\paren{\ell+1}}}$ are stored on chip. When calculating $\frac{\partial \mathcal{L}}{\partial \bm{X}_s^{\paren{\ell-1}}}$, we overwrite $\frac{\partial \mathcal{L}}{\partial \bm{X}_s^{\paren{\ell+1}}}$ with the newly generated $\frac{\partial \mathcal{L}}{\partial \bm{X}_s^{\paren{\ell-1}}}$. 
Also note that the optimizer specific auxiliary data depends on the optimizer used. For vanilla gradient descent, no auxiliary data is needed. 
For gradient descent with momentum \cite{momentum}, we need to store the gradient with respect to model weights for the previous minibatch.
For Adam optimizer \cite{adam}, we need to store more gradient related data (e.g., first moment estimate, second moment estimate, etc. ). 
No matter what optimizer is used, the size of the auxiliary data is comparable to the size of model weights. 

Regarding the processing pipeline on-chip, there are two main computation modules to perform feature aggregation and weight transformation. 
In Figure \ref{fig: overview arch}, the Feature Aggregation module consists of a 1D accumulator array to sum the node vectors. The Weight Transformation module consists of a 2D systolic array to compute the dense matrix product between $\bm{X}$ and $\bm{W}$. 
The two modules are reused for the computation of the $L$ graph convolutional layers, and the Weight Transformation module is also reused for the MLP layer. 
The various BRAM buffers stores the data listed above. 
During forward pass, when computing layer $\ell$, the Feature Aggregation and Weight Transformation modules read from the buffer of $\bm{X}_s^{\paren{\ell-1}}$, and write into $\bm{X}_s^{\paren{\ell}}$. 
In backward pass, for layer $\ell$, the two computation modules read the buffer of $\bm{X}_s^{\paren{\ell-1}}$, and read / write into the two gradient buffers in a ping-pong fashion. 

The $\bm{X}_s^{\paren{\ell}}$ and $\frac{\partial \mathcal{L}}{\partial \bm{X}_s^{\paren{\ell}}}$ buffers use feature major data layout. So each cycle, a buffer can provide the full feature vector of one node. 
For a BRAM block ($36\text{bits}\times 1\text{K}$) of the Xilinx Alveo board we use, we store feature values (32-bit floating point) of $1$K different nodes. 

\begin{figure*}[!htb]
    \centering
    \input{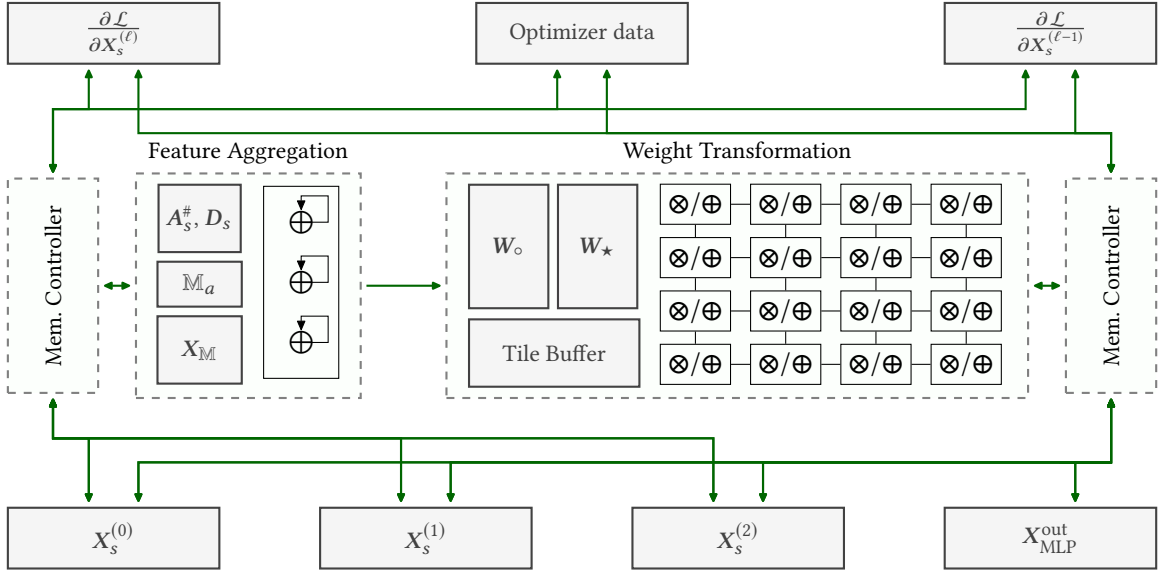}
    \caption{Overview of the processing pipeline on FPGA ($L=2$)} 
    \label{fig: overview arch}
\end{figure*}

\subsection{Feature Aggregation Module}
\label{sec: arch feat aggr}

This module performs feature aggregation in three steps. 
The first pre-computation step calculates the vector sum of node pairs in $\mathbb{M}_a$, and stores the results in the buffer for $\bm{X}_\mathbb{M}$. 
The second step computes $\bm{A}_s^\#\cdot \bm{X}_s^{\paren{\ell}}$ by reading $\bm{X}_\mathbb{M}$ and $\bm{X}_s^{\paren{\ell}}$. 
The final step applies the scaling coefficient based on $\bm{D}_s$. 
The aggregated features are written into a temporary buffer to be consumed by the Weight Transformation module. 
For the below discussion, we ignore step 3 since its complexity is low (i.e., $\mathcal{O}\paren{\size{\Vg[s]}}$) compared with the other steps (i.e., $\mathcal{O}\paren{\size{\Eg[s]}}$).
Regarding steps 1 and 2, since the feature length is large, we explore parallelism along the feature dimension. 
In this case, a simple 1D accumulator array of size $P_\text{agg}=\max_{0\leq \ell \leq L-1} f^{\paren{\ell}}$ is sufficient. 
During pre-computation, pairs of node indices are read sequentially from $\mathbb{M}_a$. Vector sum of each pair consumes two cycles. 
Similarly, during propagation of $\bm{A}_s^\#$, indices in the neighbor lists are read sequentially. 
Note that even though $\mathbb{M}_a$ contains pairs of multiple rounds (see last paragraph of Section \ref{sec: algo redundancy}), as long as the pre-computation on the earlier rounds finishes before that of later rounds, the memory controller for the accumulator array is straightforward. 
For example, assume the original subgraph nodes are indexed continuously from $1$ to $\size{\Vg[s]}$ and the new nodes generated in the first round are indexed from $\size{\Vg[s]}+1$ to $\size{\Vg[s]}+\size{\Mg[a]}$. The matching in the second round, $\Mg[a]'$, can only contain indices from $1$ to $\size{\Vg[s]}+\size{\Mg[a]}$.  
If the second round starts after the first round has finished, all features required by $\Mg[a]'$ are ready. So the accumulator array can directly read $\bm{X}_s^{\paren{\ell}}$ and $\bm{X}_\mathbb{M}$, and continue filling $\bm{X}_\mathbb{M}$ with new features indexed from $\size{\Vg[s]}+\size{\Mg[a]}+1$ to $\size{\Vg[s]}+\size{\Mg[a]}+\size{\Mg[a]'}$.

\paragraph{Remark} 
To further increase parallelism, we can aggregate vectors of multiple neighbors in the same cycle. Then the accumulator array would be replaced by an accumulator tree, and the buffer would be further partitioned to reduce bank conflicts in BRAM. 
In this case, challenges such as load-balance would emerge.
Fortunately, for most target FPGA devices, parallelism of just $\max f^{\paren{\ell}}$ is sufficient. We discuss this in detail in Section \ref{sec: load balance} and \ref{sec: dsp util}. 
Also, we evaluate the storage overhead due to $\bm{X}_\mathbb{M}$ in Section \ref{sec: exp breakdown}.

\subsection{Weight Transformation Module}
\label{sec: arch systolic}

This module performs weight transformation of either a GCN layer (i.e., $h\paren{\cdot}$ function in Figure \ref{fig: overview gcn}) or a MLP layer. 
The main operation is multiplication of dense matrices. 
We use a 2D systolic array to execute the blocked matrix multiplication algorithm. 
Let the dimension of the systolic array be $P_\text{sys}$ (where $P_\text{sys}\ll  f^{\paren{\ell}}$). So the computation parallelism of this module is $P_\text{sys}^2$, and each cycle $2P_\text{sys}$ data are streamed in the array. 
Next we specify the data access pattern. Suppose we compute $\bm{X}\cdot \bm{W}$, where $\bm{X}\in\mathbb{R}^{\size{\Vg[s]}\times f}$ and $\bm{W}\in \mathbb{R}^{f\times f'}$. 
Then the $\bm{X}$ buffer is depth-wise partitioned to tiles of layout $P_\text{sys}\times f$, and the $\bm{W}$ buffer is width-wise partitioned to tiles of layout $f\times P_\text{sys}$. 
There are $\frac{\size{\Vg[s]}}{P_\text{sys}}\times \frac{f'}{P_\text{sys}}$ pairs of tiles. 
For each pair, the systolic array reads a diagonal (or anti-diagonal if matrix is transposed) of the two tiles per cycle. 
Thus, computing a pair of tiles consumes $f+P_\text{sys}-1$ cycles, and computing the full dot product takes $\frac{\size{\Vg[s]}}{P_\text{sys}}\cdot \frac{f'}{P_\text{sys}}\cdot \paren{f+P_\text{sys}-1}\approx \frac{1}{P_\text{sys}^2}\cdot \size{\Vg[s]}\cdot f\cdot f'$ cycles. 
To perform $\func[ReLU]{}$ in the forward pass, we only need to add a comparator in each PE of the systolic array. When the results for a pair of tiles are ready, the PEs clip negative values to zero and append a status bit to indicate whether or not the clipping occurs. 
The $\func[mask]{}$ function in the backward pass can thus be implemented in a simple way based on the $\func[True]{}$ or $\func[False]{}$ of the status bit. 

\paragraph{Interaction with Feature Aggregation module} 
When operating on the neighbor weight $\bm{W}_\star$, Weight Transformation module reads the feature from the buffer filled by the Feature Aggregation module. 
When operating on the self weight $\bm{W}_\circ$, conflicts may occur since both modules read from the $\bm{X}$ buffer. 
We add a small tile buffer in the Weight Transformation module to \emph{completely avoid read conflicts}. 
The tile buffer of size $P_\text{sys}\times f$ stores a tile of $\bm{X}$, and is filled in $f$ cycles. Data in the tile buffer stay for $\paren{f+P_\text{sys}-1}\cdot \frac{f'}{P_\text{sys}}$ cycles to enumerate all tiles of $\bm{W}$. 
Read conflicts can only happen during the filling of the tile buffer. And we simply stall the Feature Aggregation module in this short period of time. 
Since $f'\gg P_\text{sys}$, the pipeline stall has negligible impact on performance. 
The above analysis is based on the forward pass operation. In the backward pass, we swap the dimensions of $\size{\Vg[s]}$ and $f$, and the same conclusion holds.

\subsection{Scheduling}
\label{sec: schedule}


\begin{figure}[!htb]
    \centering
    \definecolor{ccpucomp}{HTML}{00BFFF}
\definecolor{cfpgacomp}{HTML}{90EE90}
\definecolor{ctransfer}{HTML}{5F9EA0}
\definecolor{cfpgacompa}{HTML}{32CD32}
\definecolor{cfpgacompb}{HTML}{ADFF2F}

\begin{tikzpicture}[
outer/.style={draw=gray,dashed,fill=green!1,thick},
labelrot/.style={rotate=90,anchor=center,align=left},
>={Stealth[inset=0pt,
length=4pt,angle'=45,round]},scale=1.,
sty edge sel/.style={ultra thick,color=c4},
sty node sel/.style={ultra thick,circle,c4,draw=c4,fill=c0,minimum size=\sizenode},
sty node unsel/.style={circle,c2,draw=c2,fill=c0,minimum size=\sizenode},
sty edge unsel/.style={color=c2},
]

\def\h{.5cm};
\def\wtrans{.4cm};
\def\wcpusample{1.6cm};
\def\wgap{1.1cm};
\def\wfpga{2.7cm};  
\def\wcpuclass{.3cm};
\node[fill=ctransfer,minimum height=\h,minimum width=\wtrans] (step 1) at (0,0) {1};
\node[fill=ccpucomp,minimum height=\h,minimum width=\wcpusample,left=0cm of step 1.east,anchor=west] (step 7) {7};
\node[fill=ctransfer,minimum height=\h,minimum width=\wtrans, right=\wgap of step 7.east, anchor=west] (step 3) {3};
\node[fill=ccpucomp,minimum height=\h,minimum width=\wcpuclass, right=0cm of step 3.east, anchor=west]
(step 4) {4};
\node[right=0cm of step 4.east,anchor=west,fill=ctransfer,minimum height=\h,minimum width=\wtrans] (step 5) {5};

\node[fill=cfpgacomp,minimum height=\h,minimum width=\wfpga, below=.6cm of step 7.west, anchor=west] (step 2) {2}; 
\node[fill=white,minimum height=\h,minimum width=\wtrans, left=0cm of step 2.west, anchor=east] (step fake) {};
\node[fill=white,minimum height=\h,minimum width=\wtrans, right=0cm of step 2.east, anchor=west] (step fake1) {};
\node[fill=white,minimum height=\h,minimum width=\wcpuclass, right=0cm of step fake1.east,anchor=west] (step fake2) {};
\node[fill=cfpgacomp,minimum height=\h,minimum width=\wfpga,below=.6cm of step 5.east,anchor=west] (step 6) {6};

\node[left=.1cm of step 1.west, anchor=east, align=left] (label cpu) {CPU}; 
\node[left=.1cm of step fake.west, anchor=east,align=left] (label fpga) {FPGA};

\draw[->] ([yshift=-.5cm] step fake.west) --  ([xshift=-.7cm,yshift=-.5cm] step 6.east) node[right] {Time};

\node[below=1cm of step 2.west, anchor=west,fill=cfpgacompa,minimum height=\h, minimum width=0.245*\wfpga] (step a) {a};
\node[left=0cm of step a.east, anchor=west,fill=cfpgacompb,minimum height=\h, minimum width=0.245*\wfpga] (step c) {c};
\node[below=.6cm of step a.west, anchor=west,fill=cfpgacompb,minimum height=\h, minimum width=.245*\wfpga] (step b) {b};
\node[fill=white,minimum height=\h,minimum width=\wcpuclass, right=0cm of step a.west,anchor=east] (step fake a) {};
\node[fill=white,minimum height=\h,minimum width=\wcpuclass, right=0cm of step b.west,anchor=east] (step fake b) {};
\node[right=0cm of step c.east,anchor=west,fill=cfpgacompa,minimum height=\h,minimum width=.245*\wfpga] (step a1) {a};
\node[right=0cm of step a1.east,anchor=west,fill=cfpgacompb,minimum height=\h,minimum width=.245*\wfpga] (step c1) {c};
\node[below=.6cm of step a1.west, anchor=west,fill=cfpgacompb,minimum height=\h, minimum width=.245*\wfpga] (step b1) {b};

\node[below=1cm of step 6.west, anchor=west,fill=cfpgacompa,minimum height=\h, minimum width=0.245*\wfpga] (step aa) {a};
\node[left=0cm of step aa.east, anchor=west,fill=cfpgacompb,minimum height=\h, minimum width=0.245*\wfpga] (step cc) {c};
\node[below=.6cm of step aa.west, anchor=west,fill=cfpgacompb,minimum height=\h, minimum width=.245*\wfpga] (step bb) {b};
\node[right=0cm of step cc.east,anchor=west,fill=cfpgacompa,minimum height=\h,minimum width=.245*\wfpga] (step aa1) {a};
\node[right=0cm of step aa1.east,anchor=west,fill=cfpgacompb,minimum height=\h,minimum width=.245*\wfpga] (step cc1) {c};
\node[below=.6cm of step aa1.west, anchor=west,fill=cfpgacompb,minimum height=\h, minimum width=.245*\wfpga] (step bb1) {b};

\node[left=.1cm of $(step fake a.west)!0.5!(step fake b.west)$, anchor=east, align=left] (label module) {Modules};

\draw (step 1.north) -- ++ (0,.4cm) node[above,labelrot,pos=4.35] {Transfer initial data};
\draw (step 7.north) -- ++ (0,.4cm) node[above,labelrot,pos=4.5] {Sample; pre-process\\(for \textit{next batch})};
\draw ([xshift=.8cm]step 2.north) -- ++ (0,1cm) node[above,labelrot,pos=2.3] {Propagate forward};
\draw (step 3.north) -- ++ (0,.4cm) node[above,labelrot,pos=5.45] {Transfer inputs to softmax};
\draw (step 4.north) -- ++ (0,.4cm) node[above,labelrot,pos=5.5] {Calculate softmax and loss};
\draw (step 5.north) -- ++ (0,.4cm) node[above,labelrot,pos=5.5] {Transfer softmax gradient};
\draw (step 6.north) -- ++ (0,1cm) node[above,labelrot,pos=2.4] {Propagate backward};

\draw (step a.west) -| ($(step a.west)+(-.2cm,0)$) |- ++ (0,-2.1cm) |- ++ (.2cm,0) node[right] {Aggregate features}; 
\draw (step b.south) -- ++ (0,-.8cm) |- ++ (.2cm,0) node[right] {Transform w.r.t. self-weight};
\draw (step c.south) -- ++ (0,-1.cm) |- ++ (.2cm,0) node[right] {Transform w.r.t. neighbor-weight};

\end{tikzpicture}
    \caption{Per-minibatch scheduling between CPU and FPGA (up), and between FPGA computation modules (down). }
    \label{fig: scheduling}
\end{figure}
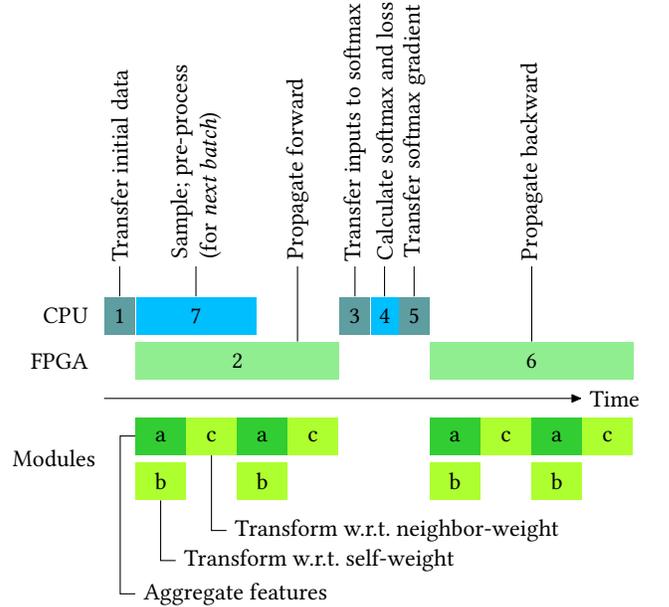

\paragraph{Scheduling between CPU and FPGA}
CPU samples the subgraph $\Gg[s]$, transforms it to $\Gg[s]^\#$, and calculates softmax, loss and the corresponding gradients. These are shown in light blue blocks of Figure \ref{fig: scheduling}. 
FPGA handles majority of the computation workload in the forward and backward pass, as shown in light green. 
Communication between CPU and FPGA, as specified in Section \ref{sec: arch overview}, is in dark blue. 
Notice that the subgraphs are independently sampled for each minibatch. Thus, CPU can prepare $\Gg[s]$ and $\Gg[s]^\#$ for the next minibatch, while simultaneously FPGA is training the current one. This explains the overlap between step 7 and 2. 
Parallelism can be explored by multiple cores of the CPU. The $C$ cores can process subgraphs for the next $C$ minibatches in parallel without any dependency. 

\paragraph{Scheduling of FPGA modules}
In Figure \ref{fig: scheduling}, we only show the scheduling of the GCN forward pass. Scheduling of MLP and the backward pass can be analogously designed. 
Computation on neighbor weights depends on the aggregated feature and computation on self weights has no dependency. 
Thus, we overlap the operations of a and b. 
The avoidance of read conflicts between a and b is discussed in Section \ref{sec: arch systolic}. During operation c, Feature Aggregation module is idle. We analyze its impact on DSP utilization in Section \ref{sec: dsp util}.
\section{Performance Analysis}
\label{sec: perf analysis}

To simplify notation, we assume $f^{\paren{\ell}}=f$, $\forall 0\leq \ell \leq L$. So the weight matrices $\bm{W}_\circ^{\paren{\ell}}\in \mathbb{R}^{f\times \frac{1}{2}f}$ and $\bm{W}_\star^{\paren{\ell}}\in\mathbb{R}^{f\times \frac{1}{2}f}$, where the $\frac{1}{2}$ factor is due to the concatenation in the forward pass. 
For the classifier, we assume a single layer MLP, with $\bm{W}_\text{MLP}\in\mathbb{R}^{f\times f}$.

Regarding FPGA resources, we assume accumulators and multipliers are implemented by DSPs and have the same hardware cost. 
We assume the target FPGA can implement $\rdsp$ number of accumulators / multipliers, store $\rbram$ words, and communicate $\rbw$ words with external memory per cycle. Here a word represents an element of the feature vectors or the weight matrices. 

Training performance depends on the parameters related to:

\begin{itemize}
    \item Minibatch and GCN: $\size{\Vg[s]}$, $\overline{d}$ and $f$
    \item Redundancy reduction: $\gamma_\text{add}$, $\gamma_\text{read}$ and $\sum_{\Mg\in\mathbb{M}_a}\size{\Mg}$
    \item FPGA architecture: $P_\text{agg}$, $P_\text{sys}$
\end{itemize}

We also use the fact that $\overline{d}\ll f\ll \size{\Vg[s]}$ to simplify analysis. 

\subsection{Computation}

Each graph convolutional layer performs feature aggregation 3 times and product on weights 6 times. 
Two of the feature aggregation operate on length-$f$ vectors, and the other one on length-$\frac{1}{2}f$ vectors. 
Since feature aggregation is parallelized along feature dimension only, it takes exactly $\gamma_\text{read}\cdot \size{\Vg[s]}\cdot \overline{d}\cdot f/P_\text{agg}$ cycles to aggregate length-$f$ features. 
All six products on weights have the same complexity, and each takes $\frac{1}{2}\cdot \size{\Vg[s]}\cdot f^2/P_\text{sys}^2$ cycles. By the schedule in Figure \ref{fig: scheduling}, to hide the feature aggregation time, we have:

\begin{subequations}
    \begin{align}
    \gamma_\text{read}\cdot \size{\Vg[s]}\cdot \overline{d}\cdot f\cdot \frac{1}{P_\text{agg}} =& \paren{1-\frac{2P_\text{sys}}{f}}\cdot \frac{1}{2}\cdot \size{\Vg[s]}\cdot f^2\cdot \frac{1}{P_\text{sys}^2}\label{eq: dsp alloc a}\\
    P_\text{agg}+2P_\text{sys}^2 =& \rdsp\label{eq: dsp alloc b}\\
    P_\text{agg} \leq& f\label{eq: dsp alloc c}
    \end{align}
    \label{eq: dsp alloc}
\end{subequations}

\noindent where factor $1-\frac{2P_\text{sys}}{f}$ is due to the pipeline stall analyzed in Section \ref{sec: arch systolic}.
Solving Equations \ref{eq: dsp alloc a} and \ref{eq: dsp alloc b} under the constraint of \ref{eq: dsp alloc c} gives the architectural parameters $P_\text{agg}^*$, $P_\text{sys}^*$. 
Under reasonable values of $f$, $\size{\Vg[s]}$, $\overline{d}$, $\gamma_\text{read}$ and $\rdsp$, the solutions $P_\text{arr}^*$ and $P_\text{sys}^*$ always exist (see also Section \ref{sec: load balance}). Total FPGA cycles\footnote{We ignore the number of cycles to apply gradients to weights (elementwise operation), since it is negligible compared to the number of cycles to calculate gradients.} to complete one minibatch is:

\begin{equation}
    T_\text{batch}=\paren{3L+3}\cdot \size{\Vg[s]}\cdot f^2\cdot \frac{1}{\paren{P_\text{sys}^*}^2}
\end{equation}

\subsection{Communication}
\label{sec: perf mem}
\paragraph{On-chip storage} All data listed in Section \ref{sec: arch overview} have to fit on-chip. Index data $\bm{A}_s^\#$ and $\mathbb{M}_a$ and coefficient $\bm{D}$ are negligible compared with feature data (since $\overline{d}\ll f$). 
Size of the buffer for $\bm{X}_\mathbb{M}$ is $f\cdot \sum_{\Mg\in\mathbb{M}_a}\size{\Mg}$. 
Size of the buffer between Feature Aggregation and Weight Transformation modules is $f\cdot\size{\Vg[s]}$. 
Size of the tile buffer in Weight Transformation module is $P_\text{sys}\cdot\size{\Vg[s]}$. 
Weights for each layer takes $f^2$. Under gradient descent with momentum, the optimizer requires additional $f^2$ storage per layer for the auxiliary data. 
Thus,

\begin{equation}
\paren{L+5} f\size{\Vg[s]} + f \sum\limits_{\Mg\in\mathbb{M}_a}\size{\Mg} + P_\text{sys}\size{\Vg[s]} + 2\paren{L+1}f^2 \leq \rbram
\label{eq: constraint bram}
\end{equation}

\noindent By adjusting $\theta$ and number of rounds (see Algorithm \ref{algo: greedy matching} and Section \ref{sec: algo redundancy}), we control $\sum\size{\Mg}$ to meet a pre-defined budget (say $\size{\Vg[s]}$). Then, we tune the graph sampler so that $\size{\Vg[s]}$ satisfies inequality \ref{eq: constraint bram}.






\paragraph{Off-chip accesses}
Ignoring $\bm{D}_s$, $\bm{A}_s^\#$ and $\mathbb{M}_a$, CPU-FPGA data transfer include the initial features and the MLP output features and gradients. Thus, $\beta$, the ratio between on-chip computation and off-chip communication, is lower bounded\footnote{To simplify expression, we only consider the computation of the systolic array on graph convolutional layers. Therefore, $f$ is just a lower bound on the ratio.} by $f$, indicating high reuse of on-chip data, and low overhead in external memory access. 

\subsection{Load-Balance}
\label{sec: load balance}

If feature aggregation is parallelized along feature dimension only, the full FPGA pipeline is \emph{perfectly} load-balanced, regardless of the graph connection. 
However, if we would have to aggregate features of multiple nodes in parallel, BRAM access conflicts would cause load-imbalance of the module and degrade training performance. 

Load-imbalance is more likely to happen on FPGAs with more DSP resources. The threshold $\hat{\text{R}}_\text{DSP}$ for a design to become load-imbalanced can be derived by plugging $P_\text{agg}=f$ into Equations \ref{eq: dsp alloc a} and \ref{eq: dsp alloc b}. After simplifying the expression, we have:

\begin{equation}
    \hat{\text{R}}_\text{DSP} >  \paren{\frac{f}{\overline{d}\gamma_\text{read}}}^2\cdot \paren{1+\overline{d}\gamma_\text{read}-\sqrt{1+2\gamma_\text{read}\overline{d}}}
\end{equation}

\noindent Feature dimension is often set to $256$, $512$ or $1024$ in the literature. Subgraph degree is often less than $20$, so we set $\overline{d}=15$. The ratio $\gamma_\text{read}$ depends on $\Gg[s]$, and we set it to $0.7$. 
With such parameter values, $\hat{\text{R}}_\text{DSP}>4048$, meaning that there have to be at least $4048$ multipliers / accumulators on chip for our FPGA pipeline to become load-imbalanced.  
Even the largest FPGAs in the market (e.g., Xilinx UltraScale+, Intel Stratix-10) does not have such high DSP capacity (considering that data in single precision floating point are necessary to achieve high training accuracy). 
Note that $\gamma_\text{read}$ helps keep the threshold $\hat{\text{R}}_\text{DSP}$ high. If $\gamma_\text{read}=1$, $\hat{\text{R}}_\text{DSP}$ is only $3038$. 


\subsection{DSP Utilization}
\label{sec: dsp util}


Since load-balance is achieved and BRAM conflicts are eliminated, we can analytically derive the DSP utilization for any $\Gg[s]$. 
From the timeline of Figure \ref{fig: scheduling}, the systolic array is idle when CPU is communicating with FPGA (ignoring the CPU computation time on softmax and loss). The accumulator array of Feature Aggregation module is idle during the stall and during computation on neighbor weights. Using the fact that computation-communication ratio $\beta>f$ (see Section \ref{sec: perf mem}), we derive the overall DSP utilization $\mu$ as:

\begin{equation}
    \mu > \mu'\cdot \frac{1}{1+\mu'\cdot \frac{\rdsp}{f\cdot \rbw}}
\end{equation}

\noindent where $\mu' = \frac{1}{\rdsp}\cdot \paren{2\paren{P_\text{sys}^*}^2 + \frac{5}{12} P_\text{arr}^* \paren{1-\frac{2P_\text{sys}^*}{f}}}$, and $P_\text{arr}^*$, $P_\text{sys}^*$ satisfiy Equations \ref{eq: dsp alloc a} and \ref{eq: dsp alloc b}. The $\frac{5}{12}$ factor is due to the aggregation of length-$\frac{1}{2}f$ features in the backward pass. 
If we can reduce the redundancy up to 
$\gamma_\text{read}>1-\frac{2P_\text{sys}^*}{f}>1-\frac{\sqrt{2\rdsp}}{f}$, we can then lower-bound $\mu'$ by $\frac{1}{1+\overline{d}/f}$. With the typical parameter values specified in Section \ref{sec: load balance}, $\mu'>95\%$. 
For most FPGA platforms, $\frac{\rdsp}{f\cdot \rbw}\ll 1$, so overall DSP utilization $\mu$ is close to $1$. 
For example, on Xilinx Alveo U200 board connected to CPU via PCIe 3.0 x16, we have $\mu>93\%$.
\section{Experiments}
We evaluate on a
Xilinx Alveo U200 board hosted by a $40$-core Xeon server (E5-2698 v4 @2.2GHz, hyper-threaded). 
CPU-FPGA communication is via PCIe 3.0 x16 (peak bandwidth: $15.8$GB/s). 
The UltraScale+ XCU200 FPGA on Alveo has 5867 DSP slices, 1766 36Kb block RAMs and 800 288Kb Ultra RAMs\footnote{In this paper, we refer to block RAM and Ultra RAM by a general term ``BRAM''.}. 
For Float32, on-chip RAM can store 8166K words, and DSPs can implement 1173 accumulators or 1955 multipliers. 
We use Vivado 2018.3 for synthesis.

\begin{table}[!ht]
\caption{Dataset statistics}
    \centering
    \resizebox{\columnwidth}{!}{
    \begin{tabular}{rccccc}
        \toprule
        Dataset & Nodes & Edges & Attribute & Classes & Train/Val/Test\\
        \midrule
        \midrule
        PPI & 14,755 & 225,270 & 50 & 121 & 0.66/0.12/0.22\\
        Reddit & 232,965 & 11,606,919 & 602 & 41 & 0.66/0.10/0.24\\
        Yelp & 716,847 & 6,977,410 & 300 & 100 & 0.75/0.15/0.10\\
        \bottomrule
    \end{tabular}}
    \label{tab: dataset}
\end{table}

\begin{table*}[!ht]
\caption{Comparison of convergence and test set accuracy (F1-micro score)}
    \centering
    \begin{tabular}{rccccccccc}
        \toprule
        \multirow{2}{*}{Method} & \multicolumn{3}{c}{PPI}  & \multicolumn{3}{c}{Reddit} & \multicolumn{3}{c}{Yelp}\\
        \cmidrule(lr){2-4}\cmidrule(lr){5-7}\cmidrule(lr){8-10}& Accuracy & Epochs & Subgraph size & Accuracy & Epochs & Subgraph size  & Accuracy & Epochs & Subgraph size\\
        \midrule
        \midrule
        GraphSAGE \cite{graphsage} & 0.603 & 71 & -- & 0.953 & 6 & -- & 0.593 & 10 & --\\
        Para-GCN \cite{ipdps19} & 0.685 & 120 & 5500 & 0.957 & 5 & 8000 & 0.606 & 8 & 4000 \\
        {\graphact} & 0.678 & 104 & 4000 & 0.952 & 5 & 2750 & 0.598 & 7 & 2750 \\
        \bottomrule
    \end{tabular}
    \label{tab: exp acc}
\end{table*}

\begin{table*}[!ht]
\caption{Comparison of resources and training time with state-of-the-art implementations}
    \centering
    \resizebox{\textwidth}{!}{
    \begin{tabular}{rccccccccc}
        \toprule
        & \multicolumn{3}{c}{Xilinx Alveo U200} &  \multicolumn{3}{c}{Xeon E5-2698 v4 server} & \multicolumn{3}{c}{NVIDIA Tesla P100}\\
        
          \cmidrule(lr){2-4}\cmidrule(lr){5-7}\cmidrule(lr){8-10}  & PPI & Reddit & Yelp & PPI & Reddit & Yelp & PPI & Reddit & Yelp\\
        \midrule
        \midrule
        Data type & Float32 & Float32 & Float32 & Float32 & Float32 &Float32 & Float32 & Float 32 & Float32\\
        Frequency (GHz) & 0.2 & 0.2 & 0.2 & 2.2 & 2.2 & 2.2 & 1.2 & 1.2 & 1.2\\
        DSP slices / CPU cores / CUDA cores & 5632 (96\%) & 5632 (96\%) & 5632 (96\%) & 40 & 40 & 40 & 3584 & 3584 & 3584\\
        BRAM / L3 / HBM2 (MB) & 34.6 (96\%) & 32.8 (91\%) & 27.3 (75\%) & 102 & 102 & 102 & 16280 & 16280 & 16280\\
        Off-chip bandwidth (GB/sec) & 15.8 & 15.8 & 15.8 & 136.6 & 136.6 & 136.6 & 15.8 & 15.8 & 15.8\\
        \midrule
        Total convergence time (sec) & 9.6 & 7.6 & 23.4 & 151.4 & 95.5 & 359.4 & 10.6 & 11.4 & 30.4\\
        \bottomrule
    \end{tabular}}
    \label{tab: exp comparison}
\end{table*}



We use three standard, large benchmark graphs \cite{graphsage,s-gcn,fastgcn,as-gcn,ipdps19} to evaluate the training speed and accuracy.
In Table \ref{tab: dataset}, ``Attribute'' specifies the initial feature length (i.e., $f^{\paren{0}}=\size{\bm{X}_v^{\paren{0}}}$). ``Classes'' specifies total number of node classes. ``Train/Val/Test'' splitting follows \cite{ipdps19}.
For all the datasets, following \cite{graphsage,s-gcn,as-gcn,fastgcn,ipdps19}, the GCN has two graph convolutional layers ($L=2$) and one MLP layer in the classifier. 
The hidden dimension is $f^{\paren{\ell}}=256$, where $\ell=1,2$. 

We compare with the baseline \cite{ipdps19} since {\graphact} uses the same minibatch algorithm as \cite{ipdps19}. 
For \cite{ipdps19}, we run the public implementation\footnote{Baseline code: \url{https://github.com/ZimpleX/gcn-ipdps19}, commit a6f531.} by Python (version 3.6.7) and Tensorflow (version 1.12). 
We run \cite{ipdps19} on both the CPU and GPU platforms. Table \ref{tab: exp comparison} shows the hardware specification of the hardware. 
Parallelization of \cite{ipdps19} on CPU is via the Intel MKL library. Thus, we add the \verb|--mkl| flag to run \cite{ipdps19} on CPU. 
Parallelization of \cite{ipdps19} on GPU is directly via Tensorflow. 
The implementation of {\graphact} is open-sourced\footnote{{\graphact} code: \url{https://github.com/GraphSAINT/GraphACT}}. 

\subsection{Training Accuracy}

Under the same hyper-parameters, the accuracy and convergence of {\graphact} are identical \cite{ipdps19}. 
However, since we set the subgraph size under the BRAM constraint (Equation \ref{eq: constraint bram}), it is necessary to evaluate accuracy under our settings. 
Table \ref{tab: exp acc} summarizes the comparison, where except the subgraph size, all hyper-parameters of Para-GCN and {\graphact} are the same. The popular GraphSAGE does not sample subgraphs, and we use its default minibatch size of 512.  
Comparing with \cite{graphsage}, we achieve much higher accuracy on PPI, and comparable accuracy on Reddit and Yelp after similar number of epochs. 
Comparing with \cite{ipdps19}, reducing the subgraph size $\size{\Vg[s]}$ has negligible impact on the overall accuracy and convergence rate. 

\subsection{Redundancy Reduction}
\label{sec: exp breakdown}

After transforming $\Gg[s]$ to $\Gg[s]^{\#}$, redundancy in computation and communication are reduced at the cost of extra on-chip buffer to store the vector sums of $\mathbb{M}_a$. 
Figure \ref{fig: exp fpga dsp} shows such tradeoff, as well as the effectiveness of redundancy reduction. Each ``$\times$'' marker corresponds to one round of redundancy reduction, and we run for 5 rounds ($\theta=2$, Algorithm \ref{algo: greedy matching}).   
We observe
\begin{enumerate*}
\item number of reads and additions can be reduced to as low as $60\%$, at the cost of a buffer less than $2\size{\Vg[s]}$, 
\item the $\gamma_\text{read}$ value used in Section \ref{sec: load balance} and \ref{sec: dsp util} (to analyze load-balance and DSP utilization) is realistic, and 
\item amount of redundancy depends on the graph topology --- while our algorithm is effective on many graphs (e.g., PPI, Yelp), exceptions exist (e.g., Reddit).
\end{enumerate*}
By Figure \ref{fig: exp fpga dsp}, in our implementation, we set a budget $B$ such that $\sum_{\Mg\in\mathbb{M}_a}\size{\Mg}<B\cdot\size{\Vg[s]}$. For PPI, Reddit, Yelp, $B=2$, $0.5$, $0.5$.

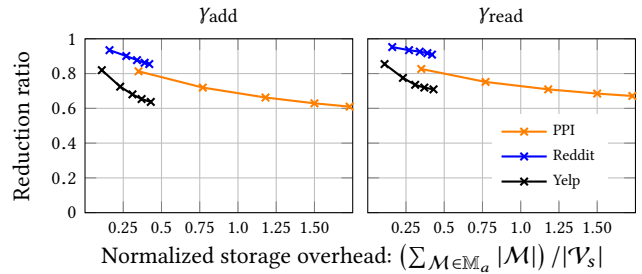
\begin{figure}[!htb]
    \centering
    \begin{tikzpicture}

\begin{groupplot}[
    group style={
    group size=2 by 1,
    y descriptions at=edge left,
    horizontal sep=2mm},
    scale only axis,
    height=0.13\textwidth,
    ymin=0,ymax=1,xmin=0,xmax=1.75,
    tick label style={font=\footnotesize},
    title style={at={(axis description cs:0.5, 0.95)}},
    grid,
    ylabel={Reduction ratio},
    y label style={at={(axis description cs:0.17,.5)},anchor=south},
    xtick={0.25,0.5,0.75,1.0,1.25,1.5},
    xticklabels={0.25,0.5,0.75,1.0,1.25,1.50},
]
\nextgroupplot[
    title={$\gamma_\text{add}$},
    width=0.2\textwidth,
    xlabel={Normalized storage overhead: $\paren{\sum_{\Mg\in\mathbb{M}_a}\size{\Mg}}/\size{\Vg[s]}$},
    every axis x label/.append style={at=(ticklabel cs:1.0)},
    xlabel shift=-17 pt,
]
\addplot[thick,color=orange,mark=x] coordinates{(0.35,1/1.23) (0.77,1/1.39) (1.18,1/1.51) (1.50,1/1.59) (1.73,1/1.64)};
\addplot[thick,color=blue,mark=x] coordinates{(0.16,1/1.07) (0.27,1/1.11) (0.34,1/1.14) (0.39,1/1.16) (0.42,1/1.17)};
\addplot[thick,color=black,mark=x] coordinates{(0.11,1/1.22) (0.23,1/1.38) (0.31,1/1.47) (0.37,1/1.53) (0.43,1/1.57)};

\nextgroupplot[
    title={$\gamma_\text{read}$},
    width=0.2\textwidth,
    legend cell align=left,
    legend style={at={(0.9,0.57)},draw=none},
    legend style={font=\scriptsize}]
\addplot[thick,color=orange,mark=x]coordinates{(0.35,1/1.21) (0.77,1/1.33) (1.18,1/1.41) (1.50,1/1.46) (1.73,1/1.49)};
\addplot[thick,color=blue,mark=x] coordinates{(0.16,1/1.05) (0.27,1/1.07) (0.34,1/1.08) (0.39,1/1.09) (0.42,1/1.10)};
\addplot[thick,color=black,mark=x] coordinates{(0.11,1/1.17) (0.23,1/1.29) (0.31,1/1.36) (0.37,1/1.39) (0.43,1/1.41)};
\legend{PPI, Reddit, Yelp}

\end{groupplot}

\end{tikzpicture}
    \caption{Reduced redundancy \emph{vs.} Storage overhead}
    \label{fig: exp fpga dsp}
\end{figure}

\subsection{Comparison with State-of-the-Art}

Table \ref{tab: exp comparison} shows the training speed comparison with the state-of-the-art \cite{ipdps19}.
All implementations use single precision floating point for weights and features. 
In our design, the DSP and BRAM resources are heavily utilized. We set $P_\text{sys}=24$ and $P_\text{arr}=128$ for all datasets. 
For the multi-core implementation, ``L3'' shows the aggregated L3-cache size and ``Off-chip bandwidth'' shows the peak CPU-main memory data transfer speed. 
For the GPU implementation, ``Off-chip bandwidth'' is the same as the proposed implementation since the GPU is connected via PCIe 3.0 x16. 
``Total convergence time'' includes the time for graph sampling, pre-processing (for redundancy reduction), GCN forward/backward propagation and data transfer from/to main memory. It excludes the time for initial data loading from the disk and Tensorflow initialization (for allocating device resources and building computation graphs). 

Comparing with the CPU baseline, we achieve $12\times$ to $15\times$ speedup. 
Apart from the overhead of the Python language, inefficiency of the CPU baseline is mainly due subgraph feature aggregation operation. 
We observe that although feature aggregation requires less than 10\% of the multiplication/addition of weight transformation, the CPU spends about equal amount of time on the two operations (see also Figure 3.D of \cite{ipdps19}). 
Comparing with the GPU baseline, our design convergences slightly faster ($1.1\times$ to $1.5\times$). 
The theoretical peak performance of the GPU (9.3 TFLOPS for Float32 \cite{p100-spec}) is much higher than that of the FPGA (1.8 TFLOPS for Float32 \cite{u200-spec}\footnote{Each Float32 multiplier consumes 3 DSPs. The max DSP frequency is 775MHz.}). 
Inefficiency of the GPU baseline is mainly due to the sub-optimal Tensorflow implementation of sparse matrix multiplication. 
Note that the CPU in our proposed design only executes light weight operations (Section \ref{sec: algo redundancy} and \ref{sec: schedule}). Thus, the training time improvement mainly comes from the highly optimized FPGA architecture. 

Although the baseline training time may be further improved by re-implementation using C++/CUDA, inefficiency in CPU and GPU due to sparse feature aggregation may not be easily eliminated. 
\section{Discussion}

This work proposed several hardware-software optimizations for GCN training on CPU-FPGA heterogeneous platforms. We discuss these optimizations and their applicability to various platforms. 

\paragraph{Design challenges} The key to accelerating GCN training is to address the challenges of memory access and load-balance. 
The solutions for these differ on various platforms. 
Regarding memory access, the solution on FPGA must optimize both on-chip and off-chip accesses. 
For data in BRAMs, we need to increase their reuse so as to reduce off-chip communication.
We also need to reduce bank access conflicts to reduce pipeline stalls. 
In {\graphact}, we reduce off-chip communication by setting the subgraph size based on the BRAM capacity (Section \ref{sec: perf mem}). We eliminate on-chip access conflicts by appropriately parallelizing the feature aggregation operation (Section \ref{sec: arch feat aggr}) and buffering data tiles between computation modules (Section \ref{sec: arch systolic}).
On the other hand, for CPU and GPU, the critical issue is to enhance data access locality since the memory
hierarchy is not explicitly exposed to the programmer. 
Considering that the full graph fits in the CPU main memory or GPU global memory (sizes of which range from $10^1$ to $10^3$ GB), the memory access challenge on CPU and GPU may be addressed by software node-reordering or by hyper-threading.
Regarding load-balance, CPU and GPU can decouple the operations of feature
propagation and weight transformation, and then adopt separate
strategies to balance them. However, on FPGA, since the two operations are processed
by a single pipeline, an integrated strategy needs to simultaneously balance the very different operations (one on dense tensors and the other on sparse
tensors).
While memory access and load-balance are challenging to optimize, a carefully designed FPGA pipeline can achieve much higher efficiency than the CPU or GPU solutions. This is due to the ability of FPGA to customize computation modules and to control memory accesses in an explicit and fine-grained manner.

\paragraph{Remark on redundancy reduction} Although redundancy reduction is an algorithm-level optimization independent of the platform, it may not be directly applicable to CPU to improve performance. 
Recall that redundancy reduction (Section \ref{sec: algo redundancy}) constructs a subgraph $\mathcal{G}_s^{\#}$ with less edges yet \emph{more} nodes than $\mathcal{G}_s$. 
Thus, feature aggregation on $\mathcal{G}_s^{\#}$ may have worse data locality than $\mathcal{G}_s$. 
On a CPU, even though the L3-cache can hold the node features of $\mathcal{G}_s^{\#}$, the overall feature aggregation performance may not benefit from the reduced computation load due to the potential increase in L1- or L2-cache misses.
On the other hand, data access locality of {\graphact} does not affect FPGA performance, as long as data of one minibatch completely fits in BRAM. 
Under the proposed architecture and data layout (Section \ref{sec: arch}), BRAM will always provide the requested node feature of $\mathcal{G}_s^{\#}$ to the Feature Aggregation module within one cycle.

\section{Conclusion}


We accelerated GCN training on CPU-FPGA heterogeneous platforms. 
By multiple software-hardware co-optimizations, we achieved conflict free BRAM access, load-balance and high DSP utilization. 


We plan to extend {\graphact} to accelerate inference, where the GCN operates on large, un-sampled graphs. 
The redundancy reduction and the computation modules of {\graphact} can be applied to achieve high DSP utilization. In addition, since the inference graph is much larger than the training subgraphs, we need to optimize off-chip communication by partitioning and scheduling algorithms. 
\begin{acks}
This work was supported by \grantsponsor{}{US NSF}{} under grant No. \grantnum{}{CNS-1643351}, \grantsponsor{}{Intel Strategic Research Alliance}{} and \grantsponsor{}{the Defense Advanced Research Projects Agency}{} under contract No. \grantnum{}{FA8750-17-C-0086}. 
\end{acks}

\clearpage
\bibliographystyle{acmart/ACM-Reference-Format}
\bibliography{citation} 

\end{document}